\theoremstyle{plain}
\newtheorem{theorem}{Theorem} %[section]
\newtheorem{lemma}  [theorem]   {Lemma}
\newtheorem{corollary}[theorem] {Corollary}
\newtheorem{fact} [theorem] {Fact}
\newtheorem{proposition} [theorem] {Proposition}
\newtheorem{conjecture}[theorem] {Conjecture}
\theoremstyle{definition}
\newtheorem{definition}[theorem] {Definition}
\DeclareMathOperator{\Col}{Col}
\DeclareMathOperator{\CSP}{CSP}
\DeclareMathOperator{\Hom}{Hom}
\DeclareMathOperator{\Mix}{Mix}
\DeclareMathOperator{\Recol}{Recol}
\DeclareMathOperator{\Recon}{Recon}
\DeclareMathOperator{\alt}{alt}
\newcommand{\mylabel}[2]{#2\def\@currentlabel{#2}\label{#1}}
 \newcommand{\cE}{\mathcal{E}}
 \newcommand{\cK}{\mathcal{K}}
 \newcommand{\B}{\cK}  % for the clique graph 
 \newcommand{\Cl}{K} % for cliques  
 \newcommand{\rCol}{\Col}
\title{Reconfiguring homomorphisms to reflexive graphs via a simple reduction}
\author[1]{Moritz M\"uhlenthaler}
\author[2]{Mark H. Siggers}
\author[1]{Thomas Suzan\footnote{Corresponding author}}
\affil[1]{
Laboratoire G-SCOP, Grenoble INP, Universit\'e Grenoble-Alpes, France\\
            \texttt{\{moritz.muhlenthaler,thomas.suzan\}@grenoble-inp.fr}}
\affil[2]{Kyungpook National University, S. Korea\\
    \texttt{mhsiggers@knu.ac.kr}}
\date{}
\pgfqpoint{\LineSpace}{\LineSpace}}%
\pgfqpoint{\LineSpace}{\LineSpace}}%
\pgfqpoint{\LineSpace}{\LineSpace}}%
\pgfqpoint{\LineSpace}{\LineSpace}}%
\pgfpoint{\LineSpace}{\LineSpace*0.866025*2}}
\pgfpoint{\LineSpace}{\LineSpace*0.866025*2}}%
\newdimen\LineSpace
\newdimen\LineWidth
\tikzset{
    pattern space/.code={\LineSpace=#1},
    pattern space=3pt,
    pattern width/.code={\LineWidth=#1},
    pattern width=.4pt
}
\definecolor{cdred}{RGB}{200,20,20}
\definecolor{cdgreen}{RGB}{220,255,200}	
\definecolor{cdblue}{RGB}{130,145,255}	
\tikzset{ %
	dred/.style={pattern=my north west lines,pattern width=.6pt,pattern space=2.3pt,pattern color=red},
	dgreen/.style={pattern=my grid,pattern width=0.8pt,pattern space=2pt,pattern color=green!80!black},
	dblue/.style={pattern=my dots,pattern width=0.7pt,pattern space=2.3pt,pattern color=blue},	
	H/.style={circle,fill=gray!20,draw=gray!20,line width=2pt,inner sep=0pt,minimum size=15pt},
	He/.style={draw=gray!20,line width=12pt},
	He2/.style={draw=gray!20,line width=6pt},
	G/.style={circle,fill=black,inner sep=0pt,minimum size=3pt},
	G1/.style={G,minimum size=5pt},
	Ge/.style={draw=black},
	v/.style={circle,draw=black!75,inner sep=0pt,minimum size=8pt},
	v1/.style={v,line width=1.5pt},
	Gv/.style={inner sep=0pt},
	Gve/.style={Ge,dashed},
	circ/.style={circle, fill=black, inner sep=2pt, node contents={}}
}
\tikzset{
	invisible/.style={opacity=0},
	visible on/.style={alt={#1{}{invisible}}},
	alt/.code args={<#1>#2#3}{%
	  \alt<#1>{\pgfkeysalso{#2}}{\pgfkeysalso{#3}} % f\pgfkeysalso doesn't change the path
	},
        cross/.style={cross out, thick,draw=black, minimum size=2*(#1-\pgflinewidth), inner sep=0pt, outer sep=0pt},
        %default radius will be 1pt.
        cross/.default={0.25em},
        edge/.style={thick,black},
	arc/.style={thick,black,->},
        medge/.style={decorate,very thick,decoration={snake}},
        aedge/.style={very thick,dashed,black},
        dedge/.style={thick,->},
        availedge/.style={thick,blue},
        vertex/.style={inner sep=0.25em,shape=circle,thick,draw,node distance=2em},
	smalledge/.style={thick,DodgerBlue},
        smallvertex/.style={inner sep=0.1em,shape=circle,thick,draw=DodgerBlue,node distance=2em}
}
\tikzset{
    pattern space/.code={\LineSpace=#1},
    pattern space=3pt,
    pattern width/.code={\LineWidth=#1},
    pattern width=.4pt
}
\definecolor{cdred}{RGB}{200,20,20}
\definecolor{cdgreen}{RGB}{220,255,200}	
\definecolor{cdblue}{RGB}{130,145,255}	
\tikzset{ %
	dred/.style={pattern=my north west lines,pattern width=.6pt,pattern space=2.3pt,pattern color=red},
	dgreen/.style={pattern=my grid,pattern width=0.8pt,pattern space=2pt,pattern color=green!80!black},
	dblue/.style={pattern=my dots,pattern width=0.7pt,pattern space=2.3pt,pattern color=blue},	
	H/.style={circle,fill=gray!20,draw=gray!20,line width=2pt,inner sep=0pt,minimum size=15pt},
	H2/.style={circle,fill=gray!20,draw=gray!20,line width=2pt,inner sep=0pt,minimum size=30pt},
	He/.style={draw=gray!20,line width=12pt},
	He2/.style={draw=gray!20,line width=6pt},
	G/.style={circle,fill=black,inner sep=0pt,minimum size=3pt},
	G1/.style={G,minimum size=5pt},
	Ge/.style={draw=black},
	Ge2/.style={draw=black, line width=1.5pt},
	v/.style={circle,draw=black!75,inner sep=0pt,minimum size=8pt},
	v1/.style={v,line width=1.5pt},
	Gv/.style={inner sep=0pt},
	Gve/.style={Ge,dashed},
}
\newcommand{\dsevengraph}{
	\begin{scope}
		\node[H, label =  {[red!40,label distance=2mm]90:0}] (h0) at (0,0){};
		\node[H, label =  {[red!40,label distance=2mm]-120:2}] (ha2) at (-2,-1){};
		\node[H, label =  {[red!40,label distance=2mm]120:1}] (ha1) at (-2,1){};
		\node[H, label =  {[red!40,label distance=2mm]-60:3}] (hb1) at (2,-1){};
		\node[H, label =  {[red!40,label distance=2mm]60:4}] (hb2) at (2,1){};
		\node[H, label =  {[red!40,label distance=2mm]-120:5}] (hc1) at (-1,-3){};
		\node[H, label =  {[red!40,label distance=2mm]-60:6}] (hc2) at (1,-3){};
		
		\draw [blue!20] (h0.center) -- (ha1.center) -- (ha2.center) -- cycle;
		\draw [blue!20] (h0.center) -- (hb1.center) -- (hb2.center) -- cycle;

		\draw[He2, color = blue!20] (h0.center) --(ha1.center) -- (ha2.center) -- (h0.center);
		\draw[He2, color = blue!20] (h0.center) --(hb1.center) -- (hb2.center) -- (h0.center);
		\draw[He2, color = red!20] (ha2.center) -- (hc1.center) -- (hc2.center) -- (hb1.center);
		
		\node[H] (h0) at (0,0){};
		\node[H] (ha2) at (-2,-1){};
		\node[H] (ha1) at (-2,1){};
		\node[H] (hb1) at (2,-1){};
		\node[H] (hb2) at (2,1){};
		\node[H] (hc1) at (-1,-3){};
		\node[H] (hc2) at (1,-3){};
	\end{scope}
}
\newcommand{\dBsevengraph}{
	\begin{scope}
		\node[H, label =  {[red!40,label distance=2mm]180:0}] (h0) at (-1,3){};
		\node[H, label =  {[red!40,label distance=2mm]180:1}] (h1) at (-1,2){};
		\node[H, label =  {[red!40,label distance=2mm]180:2}] (h2) at (-1,1){};
		\node[H, label =  {[red!40,label distance=2mm]180:3}] (h3) at (-1,0){};
		\node[H, label =  {[red!40,label distance=2mm]180:4}] (h4) at (-1,-1){};
		\node[H, label =  {[red!40,label distance=2mm]180:5}] (h5) at (-1,-2){};
		\node[H, label =  {[red!40,label distance=2mm]180:6}] (h6) at (-1,-3){};
	
		\node[H,fill=blue!20, label =  {[red!20,label distance=2mm]0:012}] (hT1) at (1,2){};
		\node[H,fill=blue!20, label =  {[red!20,label distance=2mm]0:034}] (hT2) at (1,1){};
	
		\node[H,fill=red!20,label =  {[red!40,label distance=2mm]0:25}] (hE1) at (1,0){};
		\node[H,fill=red!20,label =  {[red!40,label distance=2mm]0:56}] (hE2) at (1,-1){};
		\node[H,fill=red!20,label =  {[red!40,label distance=2mm]0:36}] (hE3) at (1,-2){};
		
		\draw[He2] (hT1) -- (h0);
		\draw[He2] (hT1) -- (h1);
		\draw[He2] (hT1) -- (h2);
		\draw[He2] (hT2) -- (h0);
		\draw[He2] (hT2) -- (h3);
		\draw[He2] (hT2) -- (h4);
		\draw[He2] (hE1) -- (h2);
		\draw[He2] (hE1) -- (h5);
		\draw[He2] (hE2) -- (h5);
		\draw[He2] (hE2) -- (h6);
		\draw[He2] (hE3) -- (h6);
		\draw[He2] (hE3) -- (h3);
	\end{scope}
}
\begin{document}

\maketitle

\begin{abstract}
    Given a graph $G$ and two graph homomorphisms $\alpha$ and $\beta$ from $G$ to a fixed graph $H$, the problem $H$-Recoloring asks whether there is a transformation from $\alpha$ to $\beta$ that changes the image of a single vertex at each step and keeps a graph homomorphism throughout. The complexity of the problem depends among other things on the presence of loops on the vertices. We provide a simple reduction that, using a known algorithmic result for $H$-Recoloring for square-free irreflexive graphs $H$, yields a polynomial-time algorithm for $H$-Recoloring for square-free reflexive graphs $H$. This generalizes all known algorithmic results for $H$-Recoloring for reflexive graphs $H$. Furthermore, the construction allows us to recover some of the known hardness results. Finally, we provide a partial inverse of the construction for bipartite instances.
\end{abstract}

\noindent \textbf{Keywords}: graph homomorphisms, reconfiguration, $\Hom$-graph

\section{Introduction}\label{sect:Intro}

Given a graph $G$ and two graph homomorphisms $\alpha, \beta: G \to H$, the $H$-recoloring problem $\Recol(H)$ asks, whether $\alpha$ can be transformed into $\beta$ by
changing the image of a single vertex of $G$ at a time, maintaining a
homomorphism $G \to H$ at each step.   The $H$-recoloring problem generalizes the
graph $k$-recoloring problem, which is a prototypical example of a
combinatorial reconfiguration problem. In \cite{Cereceda:11}, Cereceda, van
den Heuvel and Johnson used  techniques resembling homotopy to show that the
$3$-recoloring problem for graphs is polynomial-time solvable in the order of the graph $G$ that is recolored.  This seems
quite surprising considering that deciding whether a given graph admits a 3-coloring is \NP-complete. The
technique was given its proper setting by Wrochna, who, making explicit use of
topology, proved the following.  A graph is {\em square-free} if it does not contain a cycle on four vertices as a subgraph, induced or otherwise.

\begin{theorem}[{\cite{Wrochna:20}}]
    \label{thm:wrochna}
    For any irreflexive square-free graph $H$, the problem $\Recol(H)$ admits a polynomial-time algorithm.
\end{theorem}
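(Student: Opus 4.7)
The plan is to exploit the topological structure that emerges when $H$ has no $4$-cycle. First I would reduce to the case where $G$ is connected, since reconfiguration decomposes across components. For a connected $G$, I would pick a spanning tree $T$ and a basepoint $v_0$, so that a homomorphism $\alpha\colon G\to H$ is encoded by the value $\alpha(v_0)$ together with, for each cycle in a basis of the cycle space of $G$, the closed walk in $H$ obtained by traversing that cycle under $\alpha$.

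The reconfiguration dynamics translate into a local calculus on these walks: a single-vertex recoloring step at a vertex $v\in V(G)$ inserts or deletes a spur of the form $u,v',u$ inside each affected image walk, where $u$ runs through the neighbors of $v$ used by the walk and $v'$ denotes the new color. I would then declare two closed walks in $H$ to be homotopic if they differ by a sequence of such spur insertions and deletions. The crucial use of square-freeness is that this is the \emph{only} way image walks can be modified by a reconfiguration step: because $H$ contains no $4$-cycle, there is no "diamond move" that could replace a $u$-$w$-$u'$ subwalk by a genuinely different $u$-$w'$-$u'$ subwalk through a common neighbor $w'$. Consequently, the homotopy class of each basis-cycle image is an invariant of the connected component of $\alpha$ in the reconfiguration graph, giving a necessary condition for $\alpha$ and $\beta$ to be reconfigurable.

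The core of the argument is completeness: if $\alpha$ and $\beta$ induce the same homotopy class on every basis cycle, then they can be reconfigured into one another. I would prove this by showing that every homomorphism can be deformed to a canonical form in which each basis-cycle image is a geodesic representative of its homotopy class. Square-freeness gives enough rigidity to geodesics within a fixed homotopy class that once both $\alpha$ and $\beta$ have been reduced to this canonical form, a direct reconfiguration sequence between them can be written down explicitly. The overall algorithm then computes the basis-cycle invariants of $\alpha$ and of $\beta$ (a polynomial-size computation in $|V(G)|$, since a cycle basis has size $|E(G)|-|V(G)|+1$) and compares them.

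I expect the completeness step to be the main obstacle. The difficulty is that shortening one basis-cycle image by a single-vertex recoloring move affects every other image walk passing through the same vertex of $G$, so the induction on total walk length must handle all cycles simultaneously. Making this work requires arguing that, in the absence of $4$-cycles in $H$, one can always find a shortening move that is compatible with the global homomorphism constraint — in other words, that no two non-spur-related geodesics can coexist in the same reconfiguration component and block the reduction. This is exactly where the square-freeness hypothesis on $H$ is indispensable, and it is the technical heart on which the whole polynomial-time algorithm rests.
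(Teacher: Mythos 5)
This theorem is not proved in the paper at all: it is imported verbatim from Wrochna's work \cite{Wrochna:20} and used as a black box (via \cref{prop:eqtowro} and \cref{thm:reduction}), so there is no in-paper proof to compare against. Judged on its own, your sketch does follow the general topological strategy of Wrochna's actual proof --- square-freeness forces a single-vertex recoloring to change image walks only by spur insertions/deletions, so reduced walks give homotopy invariants of the component of $\alpha$ in $\rCol(G,H)$. The necessity direction is essentially right (modulo the fact that recoloring the basepoint conjugates the based closed walks, so the invariant is really a class up to conjugation by the walk traced by $v_0$).

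The genuine gap is your completeness claim: equality of the homotopy invariants on a cycle basis is \emph{not} sufficient for reconfigurability, so ``reduce both maps to geodesic canonical form and connect them directly'' cannot work as stated. Take $H=C_5$ (irreflexive, square-free), $G=C_5$, $\alpha$ the identity and $\beta$ a rotation by one step. Both image walks wind once around $C_5$, so they are homotopic (after conjugating by the single edge from $\alpha(v_0)$ to $\beta(v_0)$), and both are already geodesic representatives of that class. Yet neither map admits any recoloring move: for each vertex $i$, the images of its two neighbors are at distance $2$ in $C_5$ and, by square-freeness, have a unique common neighbor, namely the current color of $i$. So $\alpha$ and $\beta$ are isolated vertices of $\rCol(C_5,C_5)$ and lie in different components despite identical invariants. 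This is exactly why Wrochna's theorem needs more than the homotopy data: his characterization assigns a reduced walk $q(v)$ from $\alpha(v)$ to $\beta(v)$ to \emph{every} vertex of $G$, requires a consistency condition along every edge of $G$ (not just on a cycle basis), and adds a separate \emph{realizability} condition that rules out frozen configurations and yields a schedulable order of moves. Your proposal identifies the right invariants but omits the realizability obstruction entirely, and that omission is fatal to the sufficiency direction on which the algorithm rests.
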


A graph homomorphism $G \to H$ is sometimes referred to as \emph{$H$-coloring} of $G$. Recall the $H$-coloring Dichotomy of Hell and Ne\v set\v ril \cite{HN:90}, which says the $H$-coloring problem admits a polynomial-time algorithm if $H$ is bipartite or contains a loop, and is \NP-complete otherwise.  The dichotomy for the constraint satisfaction problem ($\CSP$) of Bulatov and Zhuk \cite{Bulatov:17,Zhuk:20} extends this dichotomy to much more general structures.  
Similar to the $\CSP$, a dichotomy is expected to exist for $\Recol(H)$ as well.

 \begin{conjecture}
    \label{con:dichotomy}
     For any graph $H$, the problem $\Recol(H)$ is either polynomial-time solvable or \PSPACE-complete. 
 \end{conjecture}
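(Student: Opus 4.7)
The plan is to prove the dichotomy in three stages: identify the correct tractability boundary, give a uniform polynomial-time algorithm on one side, and a uniform \PSPACE-hardness reduction on the other. Since this is a conjecture rather than a theorem, what I can offer is a program rather than a finished proof.

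First, I would try to formulate the tractability boundary algebraically or topologically. All currently known polynomial cases---bipartite and reflexive $H$ at the $H$-coloring level, square-free irreflexive $H$ via Wrochna (\Cref{thm:wrochna}), and square-free reflexive $H$ via the present paper---share the feature that their Hom-complex has a well-behaved, locally tree-like universal cover. I would conjecture that $\Recol(H)$ is tractable exactly when $H$ admits an appropriate \emph{reconfiguration polymorphism}: an analogue, in the reconfiguration setting, of the polymorphisms that govern the CSP dichotomy of Bulatov and Zhuk. Pinning down the correct such notion, and verifying it matches the known boundary, is the first concrete task.

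Second, for the tractable side I would combine three ingredients: Wrochna's topological algorithm for square-free irreflexive graphs, the reduction of this paper that handles reflexive graphs by handing them off to the irreflexive case, and a new algorithm extending Wrochna's technique beyond the square-free setting using the algebraic structure supplied by a reconfiguration polymorphism. For the hard side I would build a generic \PSPACE-hardness reduction from a canonical \PSPACE-complete reconfiguration problem such as Nondeterministic Constraint Logic, encoding machine states into local structure of $H$; the absence of a reconfiguration polymorphism should translate into the existence of the gadgets required for the encoding.

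The crux, and by far the hardest part, is extending the polynomial-time algorithm beyond the square-free case. Wrochna's approach exploits that in a square-free graph the universal cover of the Hom-complex is acyclic enough that a recoloring sequence can be analyzed locally; when $4$-cycles are present the cover can wrap around itself in ways that defeat any purely topological analysis. Making progress here would likely require a genuinely new tool---perhaps a reconfiguration analogue of bounded-width propagation, or a structural decomposition of $H$ into topologically tame pieces glued along tractable interfaces, in the spirit of the reduction developed in this paper.
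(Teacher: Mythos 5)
This statement is a conjecture, not a theorem: the paper offers no proof of it, and your proposal does not constitute one either. What you have written is a research program, and every one of its load-bearing steps is itself an open problem. Concretely: (i) you never define the ``reconfiguration polymorphism'' that is supposed to mark the tractability boundary, so there is no candidate dichotomy criterion to verify against the known cases; (ii) the tractable side relies on ``a new algorithm extending Wrochna's technique beyond the square-free setting,'' which you yourself identify as the crux and for which you give no construction --- and this is precisely where the difficulty lies, since the reflexive $4$-cycle already yields \PSPACE-completeness, so any extension must thread a needle that no current technique can; (iii) the hard side asserts that ``the absence of a reconfiguration polymorphism should translate into the existence of the gadgets,'' which is exactly the kind of algebraic-to-combinatorial transfer that took decades to establish for the CSP dichotomy and has no known analogue for reconfiguration (the paper explicitly notes that even the Feder--Vardi-style reductions to digraphs or reflexive graphs are unavailable here). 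A further caution: it is not even established that the boundary, if a dichotomy holds, is governed by anything polymorphism-like; the known \coNP-completeness results for the related mixing problem suggest the complexity landscape of reconfiguration may not mirror that of the CSP. In short, there is no gap to repair --- there is no proof. What the paper actually contributes toward the conjecture is much more modest: a reduction showing that the square-free reflexive case follows from the square-free irreflexive case, plus a partial converse for bipartite instances.
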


Such a dichotomy may exist for more general relational structures $\mathbf{H}$ and in \cite{GKMP:09}, a dichotomy was given for Boolean structures (see \cite{Schwerdtfeger:14} for a small correction). However, their results do not suggest a complexity dichotomy for $\Recol(H)$ for graphs $H$.
Before the $\CSP$ dichotomy was proved, it was shown by Feder and Vardi in \cite{FV:98}, that it suffices to prove the dichotomy for directed graphs or for reflexive graphs:
the dichotomy would hold if it held in either of these  settings.  Since this sufficiency result, such structures have become staple testbed structures for $\CSP$-related problems.
In contrast to the CSP, we have no analogous sufficiency results for the $\Recol(H)$, that is, we do not know if for every relational structure $\mathbf{H}$ there is some directed graph $H$ (or reflexive graph $H$) such that $\Recol(\mathbf{H})$ is polynomially equivalent to $\Recol(H)$.

In order to make progress towards \cref{con:dichotomy},  we introduce a simple reduction,
which allows us to reduce
 reflexive instances of $\Recol(H)$ to $\Recol(\B(H))$ for some bipartite (irreflexive) graph $\B(H)$. 
The main idea of our construction is to show that a natural correspondence between homomorphisms $G
\to H$ and the homomorphisms from the vertex-edge incidence graph of $G$ to the
vertex-clique incidence graph $\B(H)$ of $H$ also holds for reconfiguration. A \emph{diamond} is the graph obtained from a $K_4$, a complete graph on four vertices, by removing exactly one edge.
For graphs $H$ without induced diamonds, the vertex-clique incidence graph $\B(H)$ is square-free. We may therefore invoke Theorem \ref{thm:wrochna} as a black box to obtain the following theorem. %generalisation of \cref{trifree}.
\begin{theorem}
    \label{thm:diamond-free}
    For any reflexive graph $H$ without a induced diamonds, the problem $\Recol(H)$ admits a polynomial-time algorithm for reflexive instances.
\end{theorem}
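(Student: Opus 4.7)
The approach is to instantiate the vertex--clique incidence construction announced in the introduction and combine it with \cref{thm:wrochna}. Define $\B(H)$ to be the bipartite graph with parts $V(H)$ and the set of maximal cliques of $H$, with $v$ adjacent to $K$ exactly when $v \in K$. The plan is to (i) check that $\B(H)$ is irreflexive and square-free under the diamond-freeness hypothesis, (ii) invoke the paper's main reduction to translate $\Recol(H)$ on reflexive instances into $\Recol(\B(H))$, and (iii) apply \cref{thm:wrochna} as a black box.

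For square-freeness: a $4$-cycle in $\B(H)$ would consist of two distinct vertices $u,v \in V(H)$ and two distinct maximal cliques $K_1,K_2$ with $\{u,v\} \subseteq K_1 \cap K_2$. Pick $w \in K_1 \setminus K_2$; by maximality of $K_2$ there exists $z \in K_2$ not adjacent to $w$. Since $w$ is adjacent to both $u$ and $v$, we have $z \notin \{u,v\}$, so $\{u,v,w,z\}$ induces a diamond in $H$ with missing edge $wz$, contradicting our hypothesis. Hence $\B(H)$ has no $C_4$, and since it is bipartite and loopless by construction, \cref{thm:wrochna} applies to it.

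The main work is in the reduction itself, which I would set up as follows. Lift a homomorphism $\gamma \colon G \to H$ to a homomorphism $\tilde\gamma \colon G' \to \B(H)$, where $G'$ is the vertex--edge incidence graph of $G$: vertices of $G$ are sent to themselves, and each edge $uv \in E(G)$ is sent to a maximal clique of $H$ containing $\{\gamma(u),\gamma(v)\}$. Conversely, restricting any homomorphism $G' \to \B(H)$ to $V(G)$ yields a homomorphism $G \to H$. The core task is to show that reconfiguration transports across this correspondence: a single vertex-recoloring step in $H$ should correspond to a bounded sequence of steps in $\B(H)$ (rerouting the incident edge-variables through appropriate shared cliques before or after changing the vertex-variable), and any reconfiguration sequence in $\B(H)$ between chosen lifts of $\alpha$ and $\beta$ should project to one in $H$ between $\alpha$ and $\beta$. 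The delicate point, which I expect to be the main obstacle, is handling the extra degrees of freedom introduced by clique choices: one has to verify that the answer in $\B(H)$ does not depend on the particular lifts of $\alpha$ and $\beta$ selected, so that the reduction yields a well-defined answer for the original instance. Reflexivity of $G$ is what lets us always extend a partial assignment on vertices to a compatible assignment on edges, since any pair of identified endpoints of an edge lies in some clique of $H$.

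Granting the reduction, the theorem is immediate: given a reflexive instance $(G,\alpha,\beta)$ of $\Recol(H)$, build $\B(H)$ and lifts $\tilde\alpha,\tilde\beta$ in polynomial time (the graph $H$ is fixed, so in particular its maximal cliques are computed once and for all), solve the resulting instance of $\Recol(\B(H))$ in polynomial time via \cref{thm:wrochna}, and return the answer.
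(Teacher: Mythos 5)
Your setup coincides with the paper's: the same graphs $\B(H)$ and $\cE(G)$, the same lift of homomorphisms, and the same plan of feeding the result to \cref{thm:wrochna}. Your verification that $\B(H)$ is square-free when $H$ has no induced diamond is correct and is in fact more detailed than the paper, which only asserts this. However, the proposal stops short of a proof: the entire content of the theorem is the equivalence you label ``the core task'', namely that $\alpha$ and $\beta$ are connected in $\rCol(G,H)$ if and only if their lifts are connected in $\rCol(\cE(G),\B(H))$ (the paper's \cref{prop:eqtowro}), and you describe what would need to be shown rather than showing it. Both directions admit short arguments: for one direction, restricting every map along a path in $\rCol(\cE(G),\B(H))$ to $V(G)$ yields a walk in $\rCol(G,H)$, since a move on an edge-vertex of $\cE(G)$ restricts to a non-move and $\rCol(G,H)$ is reflexive; for the other, a single recoloring of $u$ from $\alpha(u)$ to $\beta(u)$ lifts to the sequence that first moves each incident edge-vertex $(uv)$ to $K_{\beta(u)\beta(v)}$ and then moves $u$, where reflexivity of $G$ guarantees $\alpha(u)\beta(u)\in E(H)$ so that the intermediate maps are homomorphisms. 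Neither argument appears in your writeup.

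There is also a concrete issue hidden in the phrase ``a maximal clique of $H$ containing $\{\gamma(u),\gamma(v)\}$''. When $\gamma(u)\neq\gamma(v)$, diamond-freeness makes this clique unique (\cref{diamond}), so there is no choice and your worry about lift-dependence evaporates. But $H$ is reflexive, so an edge $uv$ of $G$ may satisfy $\gamma(u)=\gamma(v)=a$, in which case \emph{every} maximal clique through $a$ is a candidate: the lift is genuinely non-canonical, and the forward direction becomes exactly the unresolved bookkeeping you flag (moving $u$ off the colour $a$ may require first rerouting $(uv)$ away from whichever clique you happened to pick). The paper resolves this by replacing $\B(H)$ with an augmented graph $\B'(H)$ that has, for each $a\in V(H)$, a pendant vertex $K_{aa}=\{a\}$ adjacent only to $a$; edges mapped to loops are sent to these pendants, the lift becomes canonical, and since $\B'(H)$ dismantles to $\B(H)$, \cref{dismant} transfers everything back to $\B(H)$. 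Without this device, or an equivalent one, your reduction is not well-defined, and the independence-of-lift question you raise is left open by your own argument.
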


Using a strengthening of a reduction from \cite{Lee:21}, we obtain the following analogue of \cref{thm:wrochna} for reflexive graphs,

\begin{corollary} 
	\label{cor:square-free}
	For any reflexive square-free graph $H$, the problem $\Recol(H)$ admits a polynomial-time algorithm.
\end{corollary}

Notice that, in contrast to \cref{thm:wrochna}, this result is tight in the sense that if $H$ is a reflexive square then $\Recol(H)$ is \PSPACE-complete~\cite{Wrochna:20}.

Let us briefly discuss the significance of these results.
The topological approach of Wrochna that yields \cref{thm:wrochna} is robust in the sense that it has later been adapted to many different settings.  
For instance, it has been adapted to obtain polynomial-time algorithms for $\Recol(H)$ for digraphs $H$~\cite{LMS:23} as well as for reflexive graphs $H$~\cite{Lee:21}.  
While the topological approach works similarly in the various settings mentioned, it generally takes considerable work to properly tailor the topology to the setting, and so simple reductions between the settings are of considerable interest. In particular, \cref{cor:square-free} provides a strengthening of the following result from \cite{Lee:21}, which was obtained using the topological approach and represents the current state-of-the-art for recoloring homomorphisms to reflexive graphs.

\begin{comment}
In \cite{Lee:21}, Wrochna's proof was adapted to reflexive
graphs $H$ to show the following. 
\begin{theorem}[\cite{Lee:21}]
	\label{trifree} For any triangle-free reflexive graph $H$, the problem $\Recol(H)$ it admits a polynomial time algorithm for reflexive instances.
\end{theorem}

The restriction to reflexive instances is a natural one for the $\Recol(H)$ for 
reflexive graphs, and it is often not much of a restriction.  Showing that if $H$ has girth $5$, an instance of $\Recol(H)$ is unaffected by loops, the authors got the following as a corollary.      
\end{comment}

\begin{corollary}[\cite{Lee:21}]
	\label{girth5}
	For any reflexive graph $H$ of girth at least $5$, the problem $\Recol(H)$ admits a polynomial-time algorithm. 
\end{corollary}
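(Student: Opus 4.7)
The plan is to observe that \cref{girth5} follows immediately from \cref{cor:square-free}, since ``girth at least $5$'' is a strictly stronger condition on a reflexive graph $H$ than being ``square-free.'' Concretely, if $H$ has girth at least $5$ then, interpreting the girth of a reflexive graph (as in \cite{Lee:21}) as the length of its shortest cycle on at least three distinct vertices and ignoring the loops present at every vertex, $H$ contains no cycle of length $3$ and no cycle of length $4$. In particular $H$ contains no $C_4$ as a subgraph and is therefore square-free in the sense defined before \cref{thm:wrochna}.

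Applying \cref{cor:square-free} to such an $H$ directly yields a polynomial-time algorithm for $\Recol(H)$, which is the claim. There is no real obstacle in this step: all of the substantive work has already been carried out in establishing \cref{cor:square-free}, via the reduction $H \mapsto \B(H)$ announced in the introduction and the black-box use of \cref{thm:wrochna} applied to the square-free irreflexive graph $\B(H)$. The only point worth making explicit is that the girth condition is not trivialized by the loops on $H$, which is precisely what the convention above ensures and what matches the formulation used in \cite{Lee:21}, so the implication ``girth $\geq 5$ $\Rightarrow$ square-free'' holds without any further hypothesis and the corollary follows.
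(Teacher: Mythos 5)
Your derivation is correct and matches the paper's own framing: the paper states that \cref{cor:square-free} strengthens this corollary, which is exactly the implication you spell out (girth at least $5$ for a reflexive graph rules out any $C_4$ subgraph, hence $H$ is square-free and \cref{cor:square-free} applies). The only difference is that the paper attributes the result to \cite{Lee:21}, where it was originally obtained directly via the topological method, whereas you obtain it as an immediate special case of the reduction-based \cref{cor:square-free}; both are valid, and your route is the intended one within this paper.
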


\begin{comment}
Recall that a graph is reflexive if it has loops on every vertex.  The general definition of $\Recol(H)$ is given in the next section,  but the definition for reflexive graphs only differs from that for irreflexive graphs in that when the image of a vertex is changed, it must change to an adjacent image.
A main idea behind the homotopy approach is that when a cycle maps to a large enough cycle, it can only reconfigure to maps that wind it around the cycle the same way.  For irreflexive graphs, this fails when the target cycle have girth $4$ or less, but for reflexive graphs it holds on $4$-cycles. Moreover, while to make a converse statement for irreflexive graphs, one must deal with the fact that homomorphisms locally preserve bipartitions, one does not have to deal with this for reflexive graphs.      
\end{comment}

Furthermore, our reduction allows us to recover some of the known hardness results in this setting.
In \cite{Lee:20}, it was shown that $\Recol(H)$ is $\PSPACE$-complete if $H$ is
an irreflexive non-trivial (i.e., not a 4-cycle) $K_{2,3}$-free quadrangulation
of the sphere.  A similar proof was used to show that $\Recol(H)$ is
$\PSPACE$-complete for reflexive instances if $H$ is a non-trivial (i.e., not
$K_3$) $K_4$-free triangulation of the sphere.  Observing that $\B(H)$ is a
$K_{2,3}$-free irreflexive quadrangulation of the sphere when $H$ is a
non-trivial reflexive triangulation, our reduction recovers much of what is
proved for irreflexive graphs in \cite{Lee:20} from the reflexive results.  It
is not clear how much of the irreflexive case becomes redundant.  This is
discussed more in \Cref{sec:discussion}.

The reduction also has implications for the $H$-mixing problem $\Mix(H)$, which given a graph $G$ asks whether $\Recol(H)$ is true for \emph{any} two homomorphsisms $G \to H$.
In \cite{KLS:23} it was shown that $\Mix(H)$
is $\coNP$-complete, for reflexive instances, if $H$ is a reflexive graph of
girth at least $4$. Such graphs $H$ do not contain induced diamonds, so by our
reduction, the problem $\Mix(\B(H))$ is also $\coNP$-complete.  This gives a
wealth of new examples of bipartite graphs for which $\Mix(H)$ is
$\coNP$-complete. 

Finally, given that our reduction shows how to obtain a polynomial-time algorithm for $\Recol(H)$ for reflexive graphs $H$ from a polynomial-time algorithm for $\Recol(H)$ for irreflexive graphs $H$, we may ask whether there is a reduction in the other direction as well. This would show the equivalence of the reflexive and irreflexive settings. We make partial progress in this direction, providing a reduction from the irreflexive to the reflexive setting for \emph{bipartite} instances.
We denote by $H \times K_2$ the categorical product of $H$ with the graph $K_2$ (see \cref{sec:prelim} for a definition) and by $G^\circ$ the graph obtained from a graph $G$ by adding a loop to each vertex.

\begin{proposition} 
    \label{prop:reflexive}
    Let $H$ be an irreflexive square-free graph. Then $\Recol(H)$ restricted to bipartite instances can be polynomially reduced to $\Recol((H \times K_2)^\circ)$.
\end{proposition}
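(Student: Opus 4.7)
The plan is to use the natural lifting reduction $(G, \alpha, \beta) \mapsto (G, \tilde{\alpha}, \tilde{\beta})$, where $\tilde{\gamma}(v) = (\gamma(v), \sigma(v))$ and $\sigma: V(G) \to \{0,1\}$ is the indicator function of one side of a bipartition of $G$ (computable in polynomial time). Each lift is a valid homomorphism into $H \times K_2 \subseteq (H \times K_2)^\circ$ since every edge of $G$ crosses the bipartition and $\gamma$ preserves edges of $H$, so the image of an edge $uv \in E(G)$ lies in $E(H \times K_2)$. For the forward direction, a $\Recol(H)$-step $\gamma \to \gamma'$ at a vertex $v$ lifts to the step $\tilde{\gamma} \to \tilde{\gamma'}$ that updates only the first coordinate at $v$; its validity follows from the same adjacency argument since $\gamma'(v)\gamma(u) \in E(H)$ and $\sigma(v) \neq \sigma(u)$ for each neighbor $u$ of $v$.

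For the converse, suppose $\tilde{\alpha} = f_0, \ldots, f_k = \tilde{\beta}$ is a reconfiguration sequence in $\Recol((H \times K_2)^\circ)$, and write $f_i = (\phi_i, \epsilon_i)$. Call $f$ \emph{canonical} if $\epsilon_f = \sigma$; canonical homomorphisms are exactly the lifts $\tilde{\gamma}$. A short case analysis gives: (i) a canonical-to-canonical step at a vertex $v$ corresponds exactly to a $\Recol(H)$-step, since preserving $\sigma(v)$ (and using the fact that every neighbor $u$ of $v$ has $\sigma(u) = 1 - \sigma(v)$) forces the new first coordinate to be $H$-adjacent to $\phi(u)$ for every neighbor $u$; and (ii) a step leaving the canonical set at $v$ must flip $\epsilon(v)$ using only loop adjacencies, because an edge of $H \times K_2$ from the flipped vertex to an unflipped neighbor would require $\epsilon'(v) = \sigma(v)$, contradicting the flip; this forces $\phi'(v) = \phi(u)$ for every neighbor $u$ of $v$, so all neighbors of $v$ must share a common $\phi$-image. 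The goal is then to replace the sequence by one visiting only canonical homomorphisms; projecting to the first coordinate yields the desired $\Recol(H)$-sequence from $\alpha$ to $\beta$.

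The main obstacle is eliminating non-canonical excursions. My plan is to associate to each homomorphism $f: G \to (H \times K_2)^\circ$ a reconfiguration-invariant topological quantity --- for instance a class in $H^1(G; \mathbb{Z}/2)$ measuring whether the second coordinate of $f$ can be chosen coherently with the bipartition of $G$ --- and to verify that it is preserved by every reconfiguration step while agreeing trivially on $\tilde{\alpha}$ and $\tilde{\beta}$. Square-freeness of $H$ enters via the observation that $H \times K_2$ is itself square-free, since any $4$-cycle in the product would project to a $4$-cycle in $H$; this local rigidity allows the type-(ii) excursions identified above to be bypassed by canonical detours. Showing that such an invariant is well-defined and reconfiguration-invariant, and that its vanishing on canonical endpoints suffices to produce a canonical reconfiguration path, is the crux of the argument.
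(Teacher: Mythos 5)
Your setup is sound and coincides with the paper's: the lift $\tilde{\gamma}=(\gamma,\sigma)$ is exactly Wrochna's product reduction from bipartite instances of $\Recol(H)$ to $\Recol(H\times K_2)$, your forward direction is correct, and your case analysis correctly identifies that a step leaving the canonical set at $v$ must use loop edges and hence forces all neighbours of $v$ to share a $\phi$-image. But the crux --- eliminating non-canonical excursions --- is exactly what you leave undone, and the plan you sketch for it is the wrong kind of argument. The quantity you propose (a class in $H^1(G;\mathbb{Z}/2)$ measuring whether the second coordinate is coherent with the bipartition of $G$) cannot be a reconfiguration invariant: your own type-(ii) analysis shows that a single vertex genuinely can flip to the wrong side via a loop and later flip back, so non-canonical homomorphisms \emph{are} reachable from canonical ones, and no invariant of the current homomorphism can forbid these excursions. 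What is needed is not an obstruction showing excursions do not occur, but a simulation showing they are harmless.

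The paper's proof of Lemma~\ref{lem:eqbipartite} does precisely this. Given a path in $\Col(G,(H\times K_2)^\circ)$ from $\tilde{\alpha}$ to $\tilde{\beta}$, it rewrites each colour change $C_i$ into a right-side change $C_i'$: a move onto the right side is kept verbatim; a move from the right side to the wrong side is replaced by doing nothing; and a move from a wrong-side colour $a$ to a wrong-side colour $c$ is replaced by a move to the \emph{unique} common neighbour of $a$ and $c$ in $H\times K_2$ --- this is where square-freeness of $H$ (hence of $H\times K_2$, as you correctly observe) enters, to make the replacement well defined. An induction on the sequence then verifies that the rewritten moves always yield homomorphisms $G\to H\times K_2$ and terminate at $\beta$, after which projecting to the first coordinate gives the desired $\Recol(H)$ path. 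To complete your proof, replace the proposed invariant by such a step-by-step projection of the given sequence onto the canonical side; the local rigidity coming from square-freeness is what makes each projected move unambiguous, not what rules out the excursions.
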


 %Whether explicitly or not, all results we know of that show$\Recol(H)$ to be polynomial-time solvable for some $H$, are topological in nature (see \cite{LMS:23} for a discussion). 

\begin{comment}
Observing that the reduction to reflexive instance from \cite{Lee:21} holds not only when $H$ has girth $5$, but when $H$ is square-free, we obtain the following analogue of \cref{thm:wrochna} for reflexive graphs,
which is a significant generalization of \cref{girth5}, as it allows triangles in $H$. %from triangle-free reflexive graphs $H$ to reflexive graphs $H$ without induced diamonds. 
\begin{corollary} 
	\label{cor:square-free}
	For any reflexive square-free graph $H$, the problem $\Recol(H)$ admits a polynomial-time algorithm.
\end{corollary}
    
\end{comment}

\section{Preliminaries}
\label{sec:prelim}

All graphs we consider are undirected, finite, and without parallel edges, but
possibly with loops. We further consider graphs to be connected and non-empty (having at least one edge) as the reductions to this state for questions of recoloring are easy and standard.  As mentioned above, a graph is {\em reflexive} graph if every vertex has a loop. A loopless graph is called {\em irreflexive}.  For a graph $G$, we
denote by $G^\circ$ the graph we get from $G$ by adding loops to every vertex. For graphs  $G, K$, we say that $G$ is $K$-free, if $G$ does not contain $K$ as subgraph. We denote by $K_\ell$ the complete graphs on $\ell$ vertices. A \emph{diamond} is the graph obtained from $K_4$ by removing an edge. We denote by $C_\ell$ a cycle on $\ell$ vertices and call the graph $C_3 = K_3$ a \emph{triangle} and $C_4$ a \emph{square}. The (categorical) product $A \times B$ of two graphs $A$ and $B$ has vertex set $V(A\times B) = V(A) \times V(B)$, and two vertices $(a,b)$ and $(a',b')$ are adjacent if $a \sim a'$ and $b \sim b'$.
A graph homomorphism
$\alpha : G \to H$ is a mapping $V(G) \to V(H)$, such that $\alpha(u)\alpha(v)
\in E(H)$, whenever $uv \in E$.  As a homomorphism to $H$ is called an
$H$-coloring, the well known $H$-coloring problem $\Hom(H)$ asks if a given
graph $G$ admits a homomorphism to $H$.  

The main reconfiguration versions of $\Hom(H)$ consider the connectivity
of the $\Hom$-graph: Given graphs  $G$ and $H$, the $\Hom$-graph $\Hom(G,H)$ is
the graph whose vertices are homomorphisms $G \to H$ and for $\alpha, \beta
\colon G \to H$, there is an edge $\alpha\beta$ in $\Hom(G,H)$, if for any edge
$uv$ of  $G$, there is an edge $\alpha(u)\beta(v)$ in $H$. In particular,
$\Hom(G,H)$ is reflexive. The {\em $H$-recoloring graph} $\rCol(G,H)$, sometimes denoted $\Hom_1(G,H)$, is the subgraph of
$\Hom(G,H)$ consisting only of the edges between maps that differ on at most one vertex.   

\begin{comment} % THIS IS THE VERSIOn IF WE USE $\Hom_1$
The {\em $H$-recoloring problem} $\Recol(H)$ asks for two vertices
$\alpha,\beta$ of $\Hom(G,H)$, if there is a path between them in $\Hom_1(G,H)$.
The {\em $H$-reconfiguration problem} $\Recon(H)$ asks if there is a path
between them in $\Hom(G,H)$. While these are distinct problems for digraphs,
they are well known to be equivalent for graphs. When the graphs are
irreflexive the graph $\Hom_1(G,H)$ is often denoted $\Col(G,H)$, and called
the {\em $H$-recoloring graph}.
\end{comment}

Fix a graph $H$.
Given a graph $G$ and two homomorphisms $\alpha, \beta : G \to H$, the {\em $H$-recoloring problem} $\Recol(H)$ asks if there is a path between $\alpha$ and $\beta$ in $\rCol(G,H)$.
The {\em $H$-reconfiguration problem} $\Recon(H)$ asks if there is a path
between them in $\Hom(G,H)$. While these are distinct problems for digraphs,
they are well known to be equivalent for graphs. % When the graphs areirreflexive the graph $\Hom_1(G,H)$ is often denoted $\Col(G,H)$, and called the {\em $H$-recoloring graph}.
Though an instance of $\Recol(H)$ is $(G, \alpha, \beta)$, it is common to
refer to an instance just as the graph $G$ if the two homomorphisms are
irrelevant. So an instance $(G, \alpha, \beta)$ is called, say, bipartite
(resp., reflexive), if $G$ is bipartite (resp., reflexive).

While it is easy to show that a homomorphism $H \to H'$ induces a homomorphism $\Hom(G,H) \to \Hom(G,H')$ of $\Hom$-graphs, in \cite{BW:00}, Brightwell and Winkler observed a stronger relation for dismantling retractions $H \to H'$.  Given a graph $H$, a vertex $b$ of $H$ \emph{folds} into the graph $H' := H \setminus \lbrace b \rbrace \subset H$ if there is another vertex $a$ of $H$ such that all neighbors of $b$ are also neighbors of $a$. The homomorphism $H \to H'$ that fixes $H'$ and maps $b$ to $a$ is a dismantling retraction. It was shown in \cite{BW:00} that a dismantling retraction $H \to H'$ not only induces a homomorphism $\Hom(G,H) \to \Hom(G,H')$, 
but that any path between vertices of $\Hom(G,H')$ lifts to a path between them in $\Hom(G,H)$. So the problems $\Recol(H)$ and $\Recol(H')$ are polynomially equivalent, meaning, in particular, that if one of them is polynomial time solvable, or $\PSPACE$-complete, then so is the other.   
A graph $H$ {\em dismantles} to a subgraph $H^*$ if one can reduce $H$ to $H^*$ by a series of dismantling retractions.  By induction one gets the following basic fact.

\begin{fact}\label{dismant}
 If a graph $H$ dismantles to a subgraph $H'$ then the problems $\Recol(H)$ and $\Recol(H')$ are polynomially equivalent. 
\end{fact}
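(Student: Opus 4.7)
The plan is to reduce to the single-fold case and then iterate. Let $H'$ be obtained from $H$ by a single dismantling retraction, that is, $H' = H \setminus \{b\}$ for some vertex $b$ admitting another vertex $a \in V(H)$ with $N_H(b) \subseteq N_H(a)$, and let $r \colon H \to H'$ be the homomorphism fixing $H'$ and sending $b \mapsto a$. It suffices to show $\Recol(H)$ and $\Recol(H')$ are polynomially equivalent, as then a straightforward induction on the length of a dismantling sequence from $H$ to $H^*$ establishes the general statement.

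For the direction $\Recol(H') \le_p \Recol(H)$, I would simply take an instance $(G,\alpha,\beta)$ of $\Recol(H')$ and view it as an instance of $\Recol(H)$, using the identity reduction. A recoloring sequence in $\rCol(G,H')$ is also a recoloring sequence in $\rCol(G,H)$ since $H'$ is an induced subgraph of $H$. Conversely, a recoloring sequence in $\rCol(G,H)$ connecting $\alpha$ and $\beta$ can be projected by composing each map in the sequence with $r$; because $r$ fixes $H'$ and consecutive maps differ on at most one vertex, composing with $r$ produces a walk in $\rCol(G,H')$ between $r \circ \alpha = \alpha$ and $r \circ \beta = \beta$ (repeated maps being collapsed). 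Hence the two instances are equivalent.

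For the direction $\Recol(H) \le_p \Recol(H')$, given an instance $(G,\alpha,\beta)$ of $\Recol(H)$, I would reduce it to $(G, r\circ\alpha, r\circ\beta)$, which is an instance of $\Recol(H')$. The key step is to show that $\alpha$ can be connected to $r \circ \alpha$ in $\rCol(G,H)$, and similarly for $\beta$. This is done by listing the vertices $v_1, \dots, v_k$ of $G$ with $\alpha(v_i) = b$ in arbitrary order and changing them to $a$ one at a time. Each such single-vertex change is a valid recoloring step: if $w$ is any neighbor of $v_i$ in $G$, then because the current map is a homomorphism, its value $c$ at $w$ satisfies $bc \in E(H)$, so $c \in N_H(b) \subseteq N_H(a)$ and thus $ac \in E(H)$. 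The number of single-vertex changes needed is at most $|V(G)|$, giving a polynomial reduction. Combined with the previous paragraph, this shows that $(G,\alpha,\beta)$ is a yes-instance of $\Recol(H)$ if and only if $(G, r\circ\alpha, r\circ\beta)$ is a yes-instance of $\Recol(H')$.

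The only subtle point is the validity of the single-vertex rewrites in the second reduction, which is precisely where the dismantling hypothesis $N_H(b) \subseteq N_H(a)$ is used; no other difficulties arise, and the induction on the number of folds is immediate. I would present the argument as two short paragraphs covering the two directions for one fold, followed by one line invoking induction.
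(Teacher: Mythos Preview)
Your proposal is correct and follows exactly the approach the paper takes: reduce to a single fold and then induct. The paper itself does not spell out the single-fold argument but simply attributes it to Brightwell and Winkler \cite{BW:00} and then writes ``by induction one gets the following basic fact''; the argument you wrote out for one fold is precisely the standard one underlying that citation.
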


From this we know, for example, that $\Recol(H)$ is not only polynomial time solvable, but trivial, meaning that all instances are YES instances, if $H$ dismantles to a loop, and if $H$ dismantles to an edge, then $\Recol(H)$ is just the
problem $\Recol(K_2)$ of deciding if two bipartitions of a graph are the same bipartition.

\begin{comment}
The diamond free property of reflexive graphs is the restriction to graphs of a UTN property of digraphs, and it is shown in (Thomas's Thesis) that a $\Recol(H)$ is tractable for a digraph with UTN. The proof of this for digraphs requires yet another reworking of Wrochna's proof.  One would hope for a reduction like the one in this paper that allows us to apply Wrochna's results directly. \ts{I don't really know if we can cite the thesis yet since it will probably be accessible only after the defense (which is too late if we submit to SOSA). Maybe we should just avoid mentioning the UTN property.}
\end{comment}

\section{Known reductions}\label{sect:otherreductions} 
\label{sec:known}

The ideal of a reduction for recoloring is one that says that for every graph $H$ there is a graph $H'$, in some restricted class of graphs, such that $\Recol(H)$ is polynomially equivalent to $\Recol(H')$.  With such a result, one gets a complexity dichotomy for recoloring by proving it on the restricted class. We mentioned such reductions for $\CSP$ in \cref{sect:Intro}: Feder and Vardi showed that for every relational structure $\mathbf{H}$ there is a reflexive graph $H'$,
 such that $\CSP(\mathbf{H})$ is polynomially equivalent to $\CSP(H')$.
 This reduction does not extend to recoloring. %, but we do have one such reduction for recoloring.  

 %Our main reduction allows us to reduce reflexive instances of $\Recol(H)$, for a reflexive graph $H$, instances of $\Recol(\B(H)) $ to for an irreflexive graph $\B(H)$. The fact that this yields \cref{thm:diamond-free} and \cref{cor:square-free} seems justification enough, but our goal is something stronger.   

Another reduction for $\CSP$, which is extremely useful in the the proof of the $H$-coloring dichotomy, but less useful in dichotomies for, say,  list versions of $H$-coloring, is that it is enough consider {\em core} structures --- structures whose only endomorphisms are automorphisms.  The reduction to cores does not extend to recoloring, but we see by \cref{dismant} that we can recover part of it and restrict our attention to so-called {\em stiff} graphs, that is, graphs that admit no folds.

 We are aware of only one other relatively simple construction that allows one to reduce $\Recol(H)$ to $\Recol(H')$ for some graph $H' \neq H$. It uses graph products, so we call it the product reduction; it appears in Wrochna's Master's thesis~\cite{Wrochna:14}. 
Wrochna observed that $\Recol(H)$ for $R$-colorable
instances admits a polynomial reduction to $\Recol(H \times R)$. The most
useful case of this is when $R$ is the edge $K_2$, showing that bipartite
instances of $\Recol(H)$ can be polynomially reduced to $\Recol(H \times K_2)$.
From this Wrochna concluded that $H$-Recoloring is \PSPACE-complete when $H$ is
a reflexive 4-cycle~\cite{Wrochna:20}. This follows because $K_4 \times K_2 = C_4^\circ \times
K_2$, where $C_4^\circ$ is the reflexive $4$-cycle, and because $\Recol(K_4)$
is known to be $\PSPACE$-complete even for bipartite instances~\cite{BC:09}.
This reduction would be much more applicable in showing that $\Recol(H)$ is
polynomial-time solvable if it allowed us to reduce not just the bipartite instances,
but all instances of $\Recol(H)$ to $\Recol(H \times K_2)$.   

We note that the product construction has one more obvious application. Taking
$R = H$ it tells us that if $\Recol(H^n)$ is \PSPACE-complete for some $n$ then
$\Recol(H^\ell)$ is \PSPACE-complete for any $\ell \geq n$.  
So our proposed construction reduces reflexive instances of $\Recol(H)$ for reflexive $H$ to an irreflexive recoloring problem, and the product construction reduces bipartite instances of $\Recol(H)$ for any graph $H$ to an irreflexive recoloring problem.

\section{Main reduction: reflexive to irreflexive recoloring} 
\label{sec:reduction}

In this section we prove \cref{thm:diamond-free} by providing a reduction to Theorem
\ref{thm:wrochna}. We first provide some intuition of the construction along with an example in \cref{sec:intuition} and then prove \cref{thm:diamond-free} in \cref{sec:proof}.

\subsection{Intuition} 
\label{sec:intuition}

Let $G$ and $H$ be reflexive graphs and consider any homomorphism $\alpha : G \to H$.
A vertex $v$ of $G$ mapped to a vertex $\alpha(v)$ of $H$, having a loop, can
only reconfigure under an edge of $\Hom(G,H)$ to a neighbor of $\alpha(v)$.  
Moreover, since homomorphisms map edges to edges, it can only move within a clique that contains the images of the neighbors of $v$ in $G$. 
This suggests the following definitions.  

For a reflexive graph $H$ let the \emph{vertex-clique incidence graph} $\cK(H)$
be the bipartite graph with vertex set $V \cup M$, where $V = V(H)$ and $M =
M(H)$ is the set of maximal cliques of $H$, and in which $v \in V$ is adjacent
to $K \in M$ if $v$ is in $K$. For an example of a graph and its vertex-clique
incidence graph, see Figure~\ref{fig:K(H)}. 
For a reflexive graph $G$ let the \emph{vertex-edge incidence graph} $\cE(G)$
of $G$ be the bipartite graph with vertex set $V \cup E$, where $V = V(G)$ and
$E = E(G)$, and in which $v \in V$ is adjacent to $e \in E$ if $v$ is in $e$.  
Since distinct maximal cliques that share an edge must have non-adjacent vertices that are adjacent to both endpoints of the edge, we observe the following.
\begin{fact}\label{diamond}
  Let $H$ be a graph without induced diamond. Then each edge of $H$ is contained in a 
  unique maximal clique. 
\end{fact}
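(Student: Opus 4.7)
The plan is to prove the contrapositive: if some edge $uv$ of $H$ lies in two distinct maximal cliques $K_1 \neq K_2$, then $H$ contains an induced diamond, contradicting the hypothesis. The goal is to exhibit four vertices $u, v, w_1, w_2$ inducing a diamond, with $w_1 w_2$ the unique non-edge.

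First I would use maximality to produce the two ``witness'' vertices $w_1, w_2$. Since the distinct maximal cliques $K_1, K_2$ cannot contain one another, pick $w_1 \in K_1 \setminus K_2$; as $w_1, u, v$ all lie in the clique $K_1$, the vertex $w_1$ is adjacent to both $u$ and $v$, and $w_1 \notin \{u,v\}$ (since $u, v \in K_2$ while $w_1 \notin K_2$). Next, since $K_2$ is maximal and $w_1 \notin K_2$, there must exist some $w_2 \in K_2$ with $w_1 \not\sim w_2$; because $w_1$ is adjacent to both $u$ and $v$, necessarily $w_2 \notin \{u,v\}$, while $w_2$ is adjacent to both $u$ and $v$ (being in the clique $K_2$ with them).

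It then remains to observe that $\{u, v, w_1, w_2\}$ induces a diamond in $H$: the four vertices are pairwise distinct, the edges $uv, uw_1, vw_1, uw_2, vw_2$ are all present, and the sole non-edge is $w_1 w_2$. The main (and essentially only) point requiring care is verifying the distinctness of the four vertices and the correct placement of the unique non-edge; both follow directly from how $w_1$ and $w_2$ were chosen. Everything else is an immediate unpacking of the definitions of clique and maximality, so no genuine obstacle is expected.
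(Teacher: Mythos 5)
Your proof is correct and is essentially the paper's own argument: the paper justifies this fact with the single observation that two distinct maximal cliques sharing an edge must contain non-adjacent vertices each adjacent to both endpoints of that edge, which is exactly the diamond $\{u,v,w_1,w_2\}$ you construct. Your version just spells out the choice of $w_1\in K_1\setminus K_2$ and the witness $w_2\in K_2$ forced by maximality, together with the distinctness checks.
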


\begin{figure}[t]
\begin{center}
\begin{tikzpicture}[node distance={2.5cm}, thick, main/.style = {draw, circle,fill=white}] 
\begin{scope}
	\node[main,label = 90:0] (0) at (0,0){};
 	\node[main,label = 120:1] (1) at (-2,1){};
	\node[main,label = -120:2] (2) at (-2,-1){};
	\node[main,label = -60:3] (3) at (2,-1){};
	\node[main,label = 60:4] (4) at (2,1){};
	\node[main,label = -120:5] (5) at (-1,-3){};
	\node[main,label = -60:6] (6) at (1,-3){};

	\draw[color=blue,thick] (0) -- (1) -- (2) -- (0) -- (3) -- (4) -- (0);
	\draw[color=red,thick] (2) -- (5) -- (6) -- (3); 
	\draw [blue!30,thick] (0.center) -- (1.center) -- (2.center) -- cycle;
	\draw [blue!30,thick] (0.center) -- (3.center) -- (4.center) -- cycle;
	
	\node[main] (0) at (0,0){};
	\node[main] (1) at (-2,-1){};
	\node[main] (2) at (-2,1){};
	\node[main] (3) at (2,-1){};
	\node[main] (4) at (2,1){};
	\node[main] (5) at (-1,-3){};
	\node[main] (6) at (1,-3){};

\end{scope}

\begin{scope}[shift = {(7,-1)}]

	\node[main,label = 180:0] (0) at (-1,3){};
	\node[main,label = 180:1] (1) at (-1,2){};
	\node[main,label = 180:2] (2) at (-1,1){};
	\node[main,label = 180:3] (3) at (-1,0){};
	\node[main,label = 180:4] (4) at (-1,-1){};
	\node[main,label = 180:5] (5) at (-1,-2){};
	\node[main,label = 180:6] (6) at (-1,-3){};
	
	\node[main,fill=blue,label = 0:012] (T1) at (1,2){};
	\node[main,fill=blue,label = 0:034] (T2) at (1,1){};
	
	\node[main,fill=red,label = 0:25] (E1) at (1,0){};
	\node[main,fill=red,label = 0:56] (E2) at (1,-1){};
	\node[main,fill=red,label = 0:36] (E3) at (1,-2){};
	
	\draw (T1) -- (0);
	\draw (T1) -- (1);
	\draw (T1) -- (2);
	\draw (T2) -- (0);
	\draw (T2) -- (3);
	\draw (T2) -- (4);
	\draw (E1) -- (2);
	\draw (E1) -- (5);
	\draw (E2) -- (5);
	\draw (E2) -- (6);
	\draw (E3) -- (6);
	\draw (E3) -- (3);
	
\end{scope}
\end{tikzpicture}
\end{center}

\caption{On the left, a graph $H$ whose maximal cliques are the two triangles in blue and three edges in red. On the right, the vertex-clique incidence graph $\B(H)$.} \label{fig:K(H)}
\end{figure}
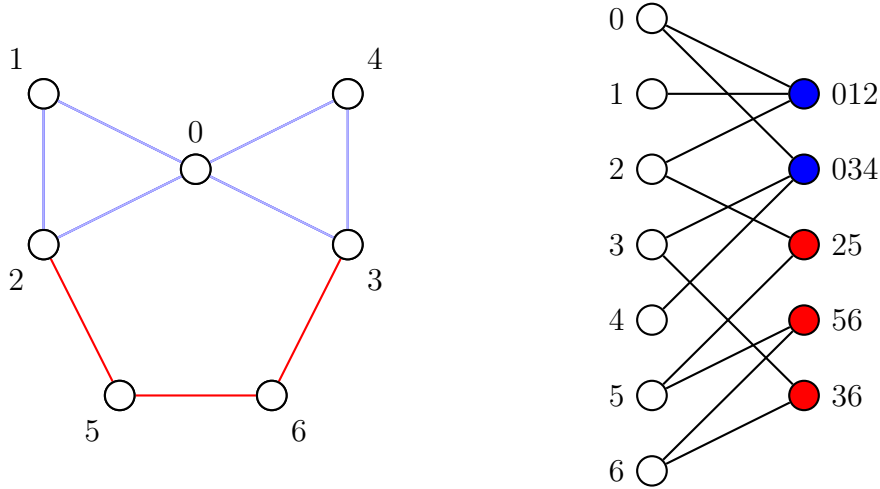

The main technical step is showing that  we can solve $\Recol(H)$ by solving $\Recol(\B(H))$ (\cref{prop:eqtowro}).
To do this, we define a map $K$ from $\rCol(G,H)$ to $\rCol(\mathcal{E}(G),\B(H))$, 
that takes a map $\alpha$ to a map $\alpha^K$, and show that there is a path between 
$\alpha$ and $\beta$ of in $\rCol(G,H)$ if and only if there is a path between 
$\alpha^K$ and $\beta^K$ in $\rCol(\mathcal{E}(G),\B(H))$.

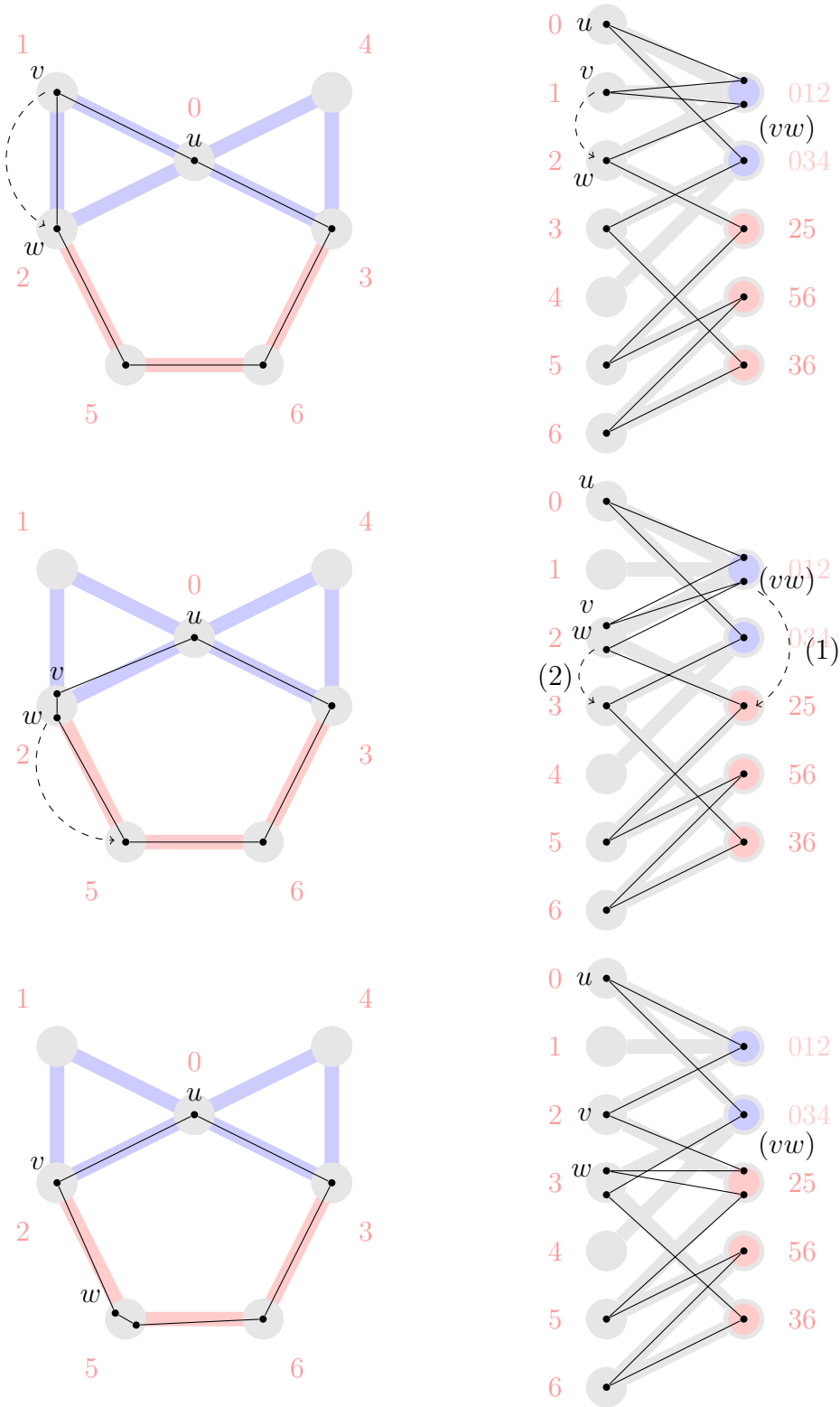
\begin{figure}[tbh]

\begin{tikzpicture}
\begin{scope}
	\dsevengraph;
	\node[G, label = 90:$u$] (0) at ($(h0)$){};
	\node[G, label = 120:$v$] (1) at ($(ha1)$){};
	\node[G, label = -120:$w$] (2) at ($(ha2)$){};
	\node[G] (3) at ($(hc1)$){};
	\node[G] (4) at ($(hc2)$){};
	\node[G] (5) at ($(hb1)$){};
	
	\draw[Ge,->,dashed] ($(ha1) - (0:5pt)$) arc (120:240:1.13);
	\draw[Ge] (0) -- (1) -- (2) -- (3) -- (4) -- (5) -- (0);

\end{scope}

\begin{scope}[shift = {(7,-1)}]
	\dBsevengraph;
	\node[G, label = 180:$u$] (0) at ($(h0)$){};
	\node[G] (K0) at ($(hT1) + (90:5pt)$) {};
	\node[G, label = 120:$v$] (1) at ($(h1)$){};
	\node[G, label = -10:$(vw)$] (K1) at ($(hT1) - (90:5pt)$) {};
	\node[G, label = -120:$w$] (2) at ($(h2)$){};
	\node[G] (K2) at ($(hE1)$){};
	\node[G] (3) at ($(h5)$){};
	\node[G] (K3) at ($(hE2)$){};
	\node[G] (4) at ($(h6)$){};
	\node[G] (K4) at ($(hE3)$){};
	\node[G] (5) at ($(h3)$){};
	\node[G] (K5) at ($(hT2)$){};
	
	\draw[Ge,->,dashed] ($(h1) - (0:5pt)$) arc (120:240:0.55);
	\draw[Ge] (0) -- (K0) -- (1) -- (K1) -- (2) -- (K2) -- (3) -- (K3) -- (4) -- (K4) -- (5) -- (K5) -- (0);

\end{scope}

\begin{scope}[shift = {(0,-7)}]
	\dsevengraph;
	\node[G, label = 90:$u$] (0) at ($(h0)$){};
	\node[G, label = 90:$v$] (1) at ($(ha2) +(90:5pt)$){};
	\node[G, label = 180:$w$] (2) at ($(ha2) -(90:5pt)$){};
	\node[G] (3) at ($(hc1)$){};
	\node[G] (4) at ($(hc2)$){};
	\node[G] (5) at ($(hb1)$){};
	
	\draw[Ge,->,dashed] ($(2) - (30:5pt)$) arc (150:270:1.14);;
	\draw[Ge] (0) -- (1) -- (2) -- (3) -- (4) -- (5) -- (0);

\end{scope}

\begin{scope}[shift = {(7,-8)}]
	\dBsevengraph;
	\node[G, label = 120:$u$] (0) at ($(h0)$){};
	\node[G] (K0) at ($(hT1) + (90:5pt)$) {};
	\node[G, label = 120:$v$] (1) at ($(h2) + (90:5pt)$){};
	\node[G, label = 0:$(vw)$] (K1) at ($(hT1) - (90:5pt)$) {};
	\node[G, label = 170:$w$] (2) at ($(h2) - (90:5pt)$){};
	\node[G] (K2) at ($(hE1)$){};
	\node[G] (3) at ($(h5)$){};
	\node[G] (K3) at ($(hE2)$){};
	\node[G] (4) at ($(h6)$){};
	\node[G] (K4) at ($(hE3)$){};
	\node[G] (5) at ($(h3)$){};
	\node[G] (K5) at ($(hT2)$){};
	
	\node at ($(hT2) + (-10:33pt)$) {$(1)$};
	\node at ($(h1) + (-115:50pt)$) {$(2)$};
	
	\draw[Ge,->,dashed] ($(2) - (0:5pt)$) arc (120:240:0.45);
	\draw[Ge,<-,dashed] ($(K2) + (0:5pt)$) arc (-60:60:0.98);
	\draw[Ge] (0) -- (K0) -- (1) -- (K1) -- (2) -- (K2) -- (3) -- (K3) -- (4) -- (K4) -- (5) -- (K5) -- (0);

\end{scope}

\begin{scope}[shift = {(0,-14)}]
	\dsevengraph;
	\node[G, label = 90:$u$] (0) at ($(h0)$){};
	\node[G, label = 120:$v$] (1) at ($(ha2)$){};
	\node[G, label = 170:$w$] (2) at ($(hc1) +(150:5pt)$){};
	\node[G] (3) at ($(hc1) -(150:5pt)$){};
	\node[G] (4) at ($(hc2)$){};
	\node[G] (5) at ($(hb1)$){};
	
	\draw[Ge] (0) -- (1) -- (2) -- (3) -- (4) -- (5) -- (0);

\end{scope}

\begin{scope}[shift = {(7,-15)}]
	\dBsevengraph;
	\node[G, label = 180:$u$] (0) at ($(h0)$){};
	\node[G] (K0) at ($(hT1)$) {};
	\node[G, label = 180:$v$] (1) at ($(h2)$){};
	\node[G, label = 10:$(vw)$] (K1) at ($(hE1) + (90:5pt)$) {};
	\node[G, label = 180:$w$] (2) at ($(h3) + (90:5pt)$){};
	\node[G] (K2) at ($(hE1) - (90:5pt)$){};
	\node[G] (3) at ($(h5)$){};
	\node[G] (K3) at ($(hE2)$){};
	\node[G] (4) at ($(h6)$){};
	\node[G] (K4) at ($(hE3)$){};
	\node[G] (5) at ($(h3) - (90:5pt)$){};
	\node[G] (K5) at ($(hT2)$){};
	
	\draw[Ge] (0) -- (K0) -- (1) -- (K1) -- (2) -- (K2) -- (3) -- (K3) -- (4) -- (K4) -- (5) -- (K5) -- (0);

\end{scope}

\end{tikzpicture}

\caption{On the left, two reconfiguration moves starting from the homomorphisms $G \to H$ of Figure~\ref{fig:K(H)}. On the right, the associated reconfiguration moves of homomorphisms $\mathcal{E}(G) \to \B(H)$. Notice that before $w$ moves, it is required to change the position of the edge $(vw)$ in $\B(H)$ first (since the edges $(vu)$ and $(vw)$ remain on same blue triangle after $v$ moves, this requires no change in $\B(H)$).  } \label{fig:recolB(H)}
\end{figure}

Assume that $H$ contains no induced diamond and consider a homomorphism $\alpha: G \to H$.  Let $\alpha^K$ be the map from $\cE(G)$ to $\B(H)$ that restricts on $V(G) \subset V(\cE(G))$ to $\alpha$, and maps a edge $e$ of $G$ in $E(G) \subset V(\cE(G))$ to the unique maximal clique of $H$ that contains $\alpha(e)$.  We will show that this is a homomorphism of $\cE(G)$ to $\B(H)$. Figure \ref{fig:recolB(H)} shows several maps $\alpha$ and the associated maps $\alpha^K$.  This construction also commutes with reconfiguration. 
Referring to left side of the figure, consider the vertex $u$ mapped to the colour $2$.  We can not move it to the to the color  $3$, as it has the neighbor $v$ that is mapped to the color $1$.  This can be seen in terms of cliques on the right side of the figure. The vertex $v$ is mapped to a maximal clique of $H$ that does not contain $3$.
We would have to move $v$ to the color $0$ first, and then move $u$ to $3$.
Looking at the clique graph, $\B(H)$, we see exactly this argument. The vertex 
$u$ cannot move to $3$ because its neighbor (the edge $uv$ now) is mapped to the
clique $012$.  Moving $v$ to $0$ corresponds to moving $uv$ to the
clique $034$, which then allows $u$ to move to $3$.

There is one more detail of the construction. According to \cref{diamond}, any edge $e = uv$ of $H$ is in a unique maximal clique $\Cl_{uv}$ of $H$.  Our intuition for a map from $\rCol(G,H)$ to $\rCol(\mathcal{E}(G),\B(H))$ was to take $\alpha$ to the map taking $uv$ to $\Cl_{uv}$. But $H$ is reflexive, so we need to know where to take $uv$ if $\alpha$ maps it to a loop.  We replace $\B(H)$ with the graph $\B'(H)$ that we get from $\B(H)$ by adding, for each $a \in V(H)$, the `false maximal clique' $\Cl_{aa}:= \lbrace a \rbrace$ adjacent only to the vertex $a$. 
\clearpage

For graphs $H$ without
induced diamond, $\B(H)$ and $\B'(H)$, are square-free, so we may invoke
\cref{thm:wrochna}. 
Though not evident in the example, the construction will not work for
irreflexive instances. The following result allows us to restrict our attention
to reflexive instances $G$ if $H$ is square-free. It was stated in \cite{Lee:21} for $H$ being a reflexive graph of girth at least $5$, but the proof only used the assumption that  
$H$ is square-free, so we state it this way.

\begin{lemma}[{see \cite{Lee:21}}]
    \label{lem:equirecol} 
    Let $G$ be a graph and let $H$ be a reflexive square-free graph. If two homomorphisms $\alpha$ and $\beta$ are adjacent in $\rCol(G,H)$, then there is a path of length at most two between them in $\rCol(G^\circ, H)$. % $\rCol(G,H)$.
\end{lemma}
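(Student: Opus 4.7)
The plan is to split on whether the two images $a := \alpha(v)$ and $b := \beta(v)$ at the unique vertex $v$ where $\alpha$ and $\beta$ differ are adjacent in $H$. Since $H$ is reflexive, $\alpha$ and $\beta$ are automatically homomorphisms $G^\circ \to H$, so both are vertices of $\rCol(G^\circ, H)$; the issue is only the distance between them.

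In the easy case $ab \in E(H)$, I would check directly that $\alpha$ and $\beta$ are adjacent in $\rCol(G^\circ, H)$. Compared to $\rCol(G,H)$, the only additional edge of $G^\circ$ whose endpoints receive different images is the loop at $v$, and the required image edge $\alpha(v)\beta(v) = ab$ is in $H$ by assumption; every other new constraint comes from a loop $xx$ with $x \neq v$, which is satisfied by reflexivity.

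The substantive case is $a \not\sim b$. Pick any neighbor $u$ of $v$ in $G$; such a $u$ exists, since $G$ is connected with at least one edge, and the degenerate case where $v$ is incident only to its own loop would already force $ab \in E(H)$ via the $\rCol(G,H)$-adjacency on the loop at $v$. Since $\alpha$ is a homomorphism, $\alpha(u)a \in E(H)$; since $\alpha \sim \beta$ in $\rCol(G,H)$, also $\alpha(u)b \in E(H)$. Thus $\alpha(u)$ is a common neighbor of $a$ and $b$ in $H$. The key observation --- this is the only place where the square-free hypothesis is used --- is that two non-adjacent vertices $a, b$ of $H$ admit at most one common neighbor $c$: two distinct common neighbors together with $a$ and $b$ would form a $C_4$ in $H$, and neither common neighbor can coincide with $a$ or $b$ because $a \not\sim b$. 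Consequently every neighbor $u$ of $v$ in $G$ satisfies $\alpha(u) = c$, and this $c$ is adjacent in $H$ to both $a$ and $b$.

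I then take $\gamma$ to agree with $\alpha$ off $v$ and send $v$ to $c$, and claim that $\alpha \sim \gamma \sim \beta$ in $\rCol(G^\circ, H)$, giving the desired path of length two. The verifications that $\gamma$ is a homomorphism $G^\circ \to H$ and that both $\rCol(G^\circ, H)$-adjacencies hold reduce to checking the image edges $ac$, $bc$ (for the loop at $v$) and $cc$ (for edges $uv$ of $G$, using that $\alpha(u)=c$), all of which lie in $H$ by construction of $c$ and by reflexivity. The one genuinely non-routine step is the square-free uniqueness of the common neighbor; everything else is a mechanical check of adjacencies.
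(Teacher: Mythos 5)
Your proof is correct. Note that the paper does not actually prove this lemma: it is imported from Lee, Noel and Siggers with the remark that their girth-$5$ hypothesis can be weakened to square-freeness, so there is no in-paper argument to compare against line by line. Your write-up is exactly the argument that remark alludes to: the case $\alpha(v)\sim\beta(v)$ gives an edge of $\rCol(G^\circ,H)$ directly, and otherwise square-freeness forces all neighbours of $v$ to share a single common image $c$ adjacent to both $\alpha(v)$ and $\beta(v)$, through which one routes the length-two path. You correctly isolate the unique-common-neighbour step as the only use of square-freeness (triangle-freeness, hence the full girth-$5$ assumption, is never needed), and you handle the degenerate cases (the vertex $v$ incident only to a loop, and the exclusion of $c\in\{\alpha(v),\beta(v)\}$) that a careless version of this argument would miss. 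No gaps.
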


So finally, we may get rid of the restriction to reflexive
instances $G$ by invoking \cref{lem:equirecol} and obtain \cref{cor:square-free}.

\subsection{Proof of \cref{thm:diamond-free}}
\label{sec:proof}

In the following, fix a reflexive graph $H$ that contains no induced diamond. Furthermore, let $G$ be a reflexive graph, that is, an instance of $\Recol(H)$.
Recall that we denote by $\mathcal{E}(G)$ the vertex-edge incidence graph of $G$ and by $\B'(H)$ the vertex-clique incidence graph of $H$ with a private clique $K_a := \{a\}$ adjacent only to the vertex $a \in V(\B(H))$ for each $a \in V(H)$. Let $\alpha : G \to H$ be a homomorphism. We first define the associated homomorphism $\alpha^K : \mathcal{E}(G) \to \B'(H)$.

\begin{definition}
Let $\alpha^K \colon \mathcal{E}(G) \to \B'(H)$ be the natural graph homomorphism  defined
on $V(G)$ by $\alpha^K(v) = \alpha(v)$ and on $E(G)$ by
 %\[ 
\[ \alpha^K((uv)) = \Cl_{\alpha(u)\alpha(v)}\enspace. \] 
  %  \alpha^K((uv)) = \begin{cases} C_{ab} & \text{ if }  \alpha(u) = a \neq b = \alpha(v)\text{, and} \\ C_{a} & \text{ if } \alpha(u) = \alpha(v) = a\text{.} \end{cases} \]
\end{definition}

We can easily check that $\alpha^K$ is indeed a graph homomorphism: for any vertex $u \in V(G)$ and any edge $(uv) \in E(G)$ containing $u$ (a neighbor of $u$ in $\mathcal{E}(G)$), we have that $\alpha^K(e)$ is a clique containing $\alpha(u)$, so $\alpha^K(u)\alpha^K(e)$ is an edge of $\B'(H)$. Conversely, one can easily check for any homomorphism $\alpha \colon \mathcal{E}(G) \to \B'(H)$, that the restriction $\alpha^\circ$ of $\alpha$ to $V(G)$ is a homomorphism $\alpha \colon G \to H$. Indeed, for adjacent vertices $u$ and $v$ of $G$, we have that $\alpha((uv))$ is a clique containing both $\alpha(u) = \alpha^\circ(u)$ and $\alpha(v) = \alpha^\circ(v)$, so these are neighbors in~$H$.

\begin{proposition} 
    \label{prop:eqtowro} 
    
    Let $G$ and $H$ be reflexive graphs and let $\alpha,\beta \colon G \to H$. Then there is a path from $\alpha$ to $\beta$ in $\rCol(G,H)$ if and only if there is a path from $\alpha^K$ to $\beta^K$ in $\rCol(\mathcal{E}(G),\B(H))$.
   % Let $G$ and $H$ be reflexive graphs such that $H$ contains no induced diamond,  and let $\alpha,\beta \colon G \to H$. If $P = \sigma_1,\sigma_2, \ldots, \sigma_n$ is a path from $\sigma_1 = \alpha^K$ to $\sigma_n = \beta^K$ in $\rCol(\mathcal{E}(G),\B(H'))$, then $\sigma_1^\circ, \sigma_2^\circ, \ldots, \sigma_n^\circ$ is a walk  from $\alpha$ to $\beta$ in $\rCol(G,H)$. Conversely, if $\alpha \sim \beta$ in $\rCol(G,H)$ then there is a path from $\alpha^K$ and $\beta^K$ in $\rCol(\mathcal{E}(G),\B(H'))$.  
\end{proposition}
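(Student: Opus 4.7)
The plan is to prove the two directions independently. The harder direction is the forward one --- unfolding a single elementary move in $\rCol(G,H)$ into a bounded-length path in $\rCol(\mathcal{E}(G), \B'(H))$ --- while the backward direction is essentially restriction. By induction on path length it suffices to treat a single elementary step on each side.

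\textbf{Forward direction.} Suppose $\alpha \to \alpha'$ in $\rCol(G,H)$ changes the image of a single vertex $u$ from $a$ to $a'$, with $aa' \in E(H)$. I will exhibit a three-phase path from $\alpha^K$ to $(\alpha')^K$. Observe first that for any proper neighbor $v$ of $u$ with $b := \alpha(v) \notin \{a,a'\}$, the set $\{a,a',b\}$ is a triangle in $H$, so by \cref{diamond} it extends to a unique maximal clique containing both $ab$ and $a'b$; hence $K_{ab} = K_{a'b}$, and $\alpha^K$, $(\alpha')^K$ agree on $(uv)$. Therefore the only $\mathcal{E}(G)$-vertices whose images must change are $u$ itself, the loop $(uu)$, and those edges $(uv)$ with $\alpha(v) \in \{a,a'\}$. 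The phases are: (i) reassign every $\mathcal{E}(G)$-vertex currently mapped to $\{a\}$ --- namely $(uu)$ and each $(uv)$ with $\alpha(v) = a$ --- to $K_{aa'}$, which is valid since $a \in K_{aa'}$; (ii) move $u$ from $a$ to $a'$, valid because every $\mathcal{E}(G)$-neighbor of $u$ is now mapped to a clique containing $a'$ (either $K_{aa'}$, or some $K_{ab}$ with $b \notin \{a,a'\}$, the latter by the triangle argument above); (iii) reassign every $\mathcal{E}(G)$-vertex whose target image is $\{a'\}$ --- namely $(uu)$ and each $(uv)$ with $\alpha(v) = a'$ --- from $K_{aa'}$ to $\{a'\}$, valid since $a'$ now occupies $u$.

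\textbf{Backward direction.} Given a path $\gamma_0 = \alpha^K, \gamma_1, \ldots, \gamma_m = \beta^K$ in $\rCol(\mathcal{E}(G),\B'(H))$, consider the restrictions $\gamma_i^\circ := \gamma_i|_{V(G)}$; each is a homomorphism $G \to H$ by the observation made just after the definition of $\alpha^K$. Consecutive maps $\gamma_i, \gamma_{i+1}$ differ at a single vertex of $\mathcal{E}(G)$. If that vertex is an edge of $G$, then $\gamma_i^\circ = \gamma_{i+1}^\circ$. Otherwise it is a vertex $u \in V(G)$ moving from $a$ to $a'$; since $u$ is adjacent in $\mathcal{E}(G)$ to the loop $(uu)$, whose image $K \in V(\B'(H))$ is unchanged during this step, $K$ contains both $a$ and $a'$. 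If $K$ is a true maximal clique, this forces $aa' \in E(H)$; if $K$ is a false clique $\{c\}$, then $a = a' = c$, contradicting the assumption of a nontrivial move. So the restricted sequence, after removing trivial steps, is a walk from $\alpha$ to $\beta$ in $\rCol(G,H)$.

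\textbf{Main obstacle.} The crux of the argument is the combined use of the diamond-free hypothesis and the loop vertices of $\mathcal{E}(G)$. The triangle consequence of \cref{diamond} is what keeps Phase (ii) a single step rather than an unbounded cascade of simultaneous reassignments, and it is also what makes $\alpha^K$ well-defined on non-loop edges in the first place. The loops $(uu)$ together with the false cliques $\{a\}$ are the other essential ingredient: on the forward side they account for the necessary change from $\{a\}$ to $\{a'\}$ as the image of $(uu)$, and on the backward side the loop adjacency is exactly what transports ``adjacent in $\B'(H)$'' back to ``edge of $H$'', making the projection faithful to elementary moves. Once these two ingredients are in place, everything else reduces to checking homomorphism axioms phase by phase.
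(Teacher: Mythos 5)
Your proof is correct and follows the same strategy as the paper's: one direction is obtained by restricting the maps along $V(G) \subset V(\mathcal{E}(G))$, the other by unfolding a single move of a vertex $u$ into a short path in $\rCol(\mathcal{E}(G),\B'(H))$ and then appealing to the dismantling of $\B'(H)$ to $\B(H)$. If anything, your three-phase routing through $K_{aa'}$ --- needed because the loop $(uu)$ and the edges $(uv)$ with $\alpha(v)=a'$ cannot be sent directly to their target false clique $\{a'\}$ before $u$ itself moves --- is spelled out more carefully than the paper's one-line argument for that direction, which glosses over exactly this point.
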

\begin{proof} 
   We prove the result with $\B'(H)$ in place of $\B(H)$, it is clear that $\B'(H)$ dismantles to $\B(H)$, so the result holds as stated, by \cref{dismant}. 

   On the one hand, assume there is a path $\alpha=\sigma_0, \sigma_1, \ldots, \sigma_\ell = \beta$ from $\alpha$ to $\beta$ in $\rCol(\mathcal{E}(G),\B'(H))$, and let $\sigma_i\sigma_{i+1}$ be an edge of this path.  If the vertex on which $\sigma_i$ and $\sigma_{i+1}$ differ is in $E(G) \subset V(\cE(G))$ then their restrictions to $V(G)$ are the same map, so are adjacent as $\rCol(G,H)$ is reflexive.  If the vertex $u$ on which they differ is in $V(G)$, then their restrictions $\sigma_{i}^\circ$ and $\sigma_{i+1}^\circ$ also differ on only this vertex.  Since these are homomorphisms $G \to H$, we have that for any neighbor $v$ of $u$ in $G$, $\sigma_i(u)\sigma_{i+1}(v) = \sigma_i(u)\sigma_i(v) \in E(H)$, so $\sigma_i^\circ \sim \sigma_{i+1}^\circ$.

On the other hand,  assume that $\alpha$ and $\beta$ are neighbors in $\rCol(G,H)$. Let $u \in V(G)$ be the vertex on which $\alpha$ and $\beta$ differ. Let $v$ be any neighbor of $u$ in $G$ and let $h := \alpha(v) = \beta(v)$.   As $h = \beta(v)$ is in the edge  $\beta(u)\beta(v)$, it is
in the clique $\beta^K((uv))$ %The mapping $\beta$ being a homomorphism $G \to H$, we have that $h$ is in the edge $$ \in \beta(uv)$, and thus $h \in \beta^K((uv))$, 
which means that $\beta^K(v)\beta^K((uv))$ is an edge of $\B'(H)$. Hence, $(uv)$ can be recolored from $\alpha^K((uv))$ to $\beta^K((uv))$. 
\end{proof}

By~\cite[Theorem 8.2]{Wrochna:20} (essentially \cref{thm:wrochna}), we obtain the following result, which immediately implies \cref{thm:diamond-free}.

\begin{theorem}\label{thm:reduction} 
    Let $H$ be a reflexive symmetric graph without induced diamond. Then $\Recol(H)$ can be solved in time $O(|E(\mathcal{E}(G))| \cdot ( |V(\mathcal{E}(G))| + |E(\B(H))|) = O(|E(G)|\cdot |V(G)| \cdot ( |(E(G)| + |E(H)|^2)$ for any reflexive instance $G$. %\todo{homomorphisms missing?}
\end{theorem}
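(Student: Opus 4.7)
The plan is to combine three ingredients already established in the paper: the reconfiguration equivalence of \cref{prop:eqtowro}, the fact that $\B(H)$ (and hence $\B'(H)$) is square-free whenever $H$ has no induced diamond, and Wrochna's polynomial-time algorithm (\cref{thm:wrochna}) in its explicit runtime form from \cite[Theorem 8.2]{Wrochna:20}.

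Given a reflexive instance $(G,\alpha,\beta)$ of $\Recol(H)$, I would first build the auxiliary objects explicitly: construct $\mathcal{E}(G)$, enumerate the maximal cliques of $H$ and build $\B'(H)$, then compute the associated homomorphisms $\alpha^K, \beta^K \colon \mathcal{E}(G) \to \B'(H)$. Enumerating the maximal cliques of $H$ is easy because, by \cref{diamond}, each non-loop edge of $H$ lies in a unique maximal clique, so a single sweep over $E(H)$ identifies $M(H)$. The next step is to verify that $\B'(H)$ is square-free. A $4$-cycle in the bipartite graph $\B(H)$ would have the form $v_1\,K_1\,v_2\,K_2$ with $v_1, v_2$ both lying in two distinct maximal cliques $K_1, K_2$; then the edge $v_1 v_2$ would belong to two maximal cliques, contradicting \cref{diamond}. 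The private cliques $K_a := \{a\}$ added to form $\B'(H)$ have degree one and therefore cannot participate in any cycle, let alone a $4$-cycle.

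With square-freeness in hand, I would apply \cref{thm:wrochna} to the instance $(\mathcal{E}(G), \alpha^K, \beta^K)$ targeting $\B'(H)$. Correctness is immediate from \cref{prop:eqtowro}: the output correctly decides whether $\alpha$ and $\beta$ lie in the same connected component of $\rCol(G,H)$.

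All that remains is the runtime bookkeeping, which I expect to be the main (though mild) obstacle. Building $\mathcal{E}(G)$, enumerating $M(H)$, and computing $\alpha^K, \beta^K$ are dominated by the main algorithm, so the bottleneck is Wrochna's algorithm, which runs in time $O(|E(\mathcal{E}(G))| \cdot (|V(\mathcal{E}(G))| + |E(\B'(H))|))$. Substituting $|V(\mathcal{E}(G))| = |V(G)| + |E(G)|$ and $|E(\mathcal{E}(G))| = 2|E(G)|$ yields the first form of the bound. For the second, I would bound $|E(\B'(H))| = \sum_{K \in M(H)} |K| + |V(H)|$ using \cref{diamond}, which gives $|M(H)| \leq |E(H)|$ (each maximal clique is determined by any one of its non-loop edges) together with $|K| \leq |V(H)| = O(|E(H)|)$; combining yields $|E(\B'(H))| = O(|E(H)|^2)$ and thus the stated $O(|E(G)| \cdot |V(G)| \cdot (|E(G)| + |E(H)|^2))$ bound.
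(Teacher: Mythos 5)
Your proposal is correct and matches the paper's (much terser) argument: the paper likewise obtains this theorem by combining \cref{prop:eqtowro} with the explicit runtime of \cite[Theorem 8.2]{Wrochna:20}, relying on the square-freeness of $\B'(H)$ for diamond-free $H$. You simply spell out the square-freeness check and the runtime bookkeeping that the paper leaves implicit, and both are done correctly.
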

Applying Lemma~\ref{lem:equirecol}, we finally obtain \cref{cor:square-free}.

\section{Reduction from irreflexive to reflexive recoloring}
\label{sec:inverse}

In this section we prove \cref{prop:reflexive} by giving a partial inverse of the reduction from \cref{sec:reduction}.  
That is, we show that when restricted to bipartite instances, Corollary~\ref{cor:square-free} implies Theorem~\ref{thm:wrochna}. 
Notice that \cref{prop:reflexive} cannot be used in order to show hardness of $\Recol(H)$, since the assumption that $H$ is square-free means that there is a polynomial-time algorithm by \cref{thm:wrochna}. We wonder whether there is a construction that can reduce $\Recol(H)$ for any $H$ to a recoloring problem with a reflexive target.
 
Let $G$ and $H$ be irreflexive graphs, such that $G$ is bipartite and $H$ is square-free. Let $V(G) = G_0 \sqcup G_1$ be a bipartition of $G$ and $V(H \times K_2) = H_0 \sqcup H_1$ be a bipartition of $H \times K_2$. Observe that when $G$ is connected and not an isolated vertex, any homomorphism $G \to H \times K_2$ must either map the vertices of $G_0$ to $H_0$ and those of $G_1$ to $H_1$, or the converse; this is preserved by reconfiguration moves. Recall that we denote by $(H \times K_2)^\circ$ the graph $H \times K_2$ with a loop added on each vertex.

\begin{lemma} 
    \label{lem:eqbipartite}
Assume that $H$ is square-free. Let $\alpha, \beta \colon G \to H \times K_2$ be homomorphisms that map vertices of $G_0$ to the same bipartite component of $H \times K_2$. There is a path from $\alpha$ to $\beta$ in $\Col(G, H \times K_2)$ if and only if there is a path from $\alpha$ to $\beta$ in $\Col(G, (H \times K_2)^\circ)$.
\end{lemma}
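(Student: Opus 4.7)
The forward direction is immediate: $H \times K_2$ is a spanning subgraph of $(H \times K_2)^\circ$, so every hom $G \to H \times K_2$ is also a hom $G \to (H \times K_2)^\circ$, and every reconfiguration edge in $\Col(G, H \times K_2)$ is preserved in $\Col(G, (H \times K_2)^\circ)$.

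For the reverse direction, the plan is to analyse the structure of moves in $\Col(G, (H \times K_2)^\circ)$. The non-loop edges of $(H \times K_2)^\circ$ are exactly those of $H \times K_2$, going between $H_0 = V(H) \times \{0\}$ and $H_1 = V(H) \times \{1\}$; the loops preserve the side. A short case check shows that $(h, 0)$ and $(h, 1)$ share no common neighbor in $(H \times K_2)^\circ$, so a reconfiguration move on a vertex $u$ with at least one neighbor in $G$ cannot flip only the second coordinate: the new image would need to be adjacent to each neighbor's image, but those lie on a single bipartite side. Hence every nontrivial move is of one of two types: \emph{type (a)} changes only the first coordinate of the moved vertex, while \emph{type (c)} changes both coordinates. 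Type (a) moves preserve the side each vertex maps to, while type (c) moves flip the side for exactly one vertex. Because $\alpha, \beta$ map $G_0$ to the same side (and thus $G_1$ to the same side too), the number of type (c) moves on each vertex is even.

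The plan is then to eliminate all type (c) moves inductively, yielding a path consisting of type (a) moves only. Such a path keeps each intermediate map a hom $G \to H \times K_2$ and each move a valid edge in $\Col(G, H \times K_2)$, completing the proof. Consider the first type (c) move, at step $t_1$, on vertex $u$. Since only type (a) moves precede it, $\sigma_{t_1 - 1}$ is good, and the type (c) condition forces $\sigma_{t_1-1}(u) = (h_0, b(u))$, $\sigma_{t_1}(u) = (h_1, 1-b(u))$ with $h_0 \sim h_1$, and every neighbor $v$ of $u$ in $G$ to satisfy $\sigma_{t_1-1}(v) = (h_1, 1-b(u))$, where $b$ denotes the bipartition of $G$. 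Let $t_2$ be the next type (c) move on $u$, which exists by the parity argument. The key claim is that the subpath from $\sigma_{t_1 - 1}$ to $\sigma_{t_2}$ can be replaced by one with strictly fewer type (c) moves: one routes $u$ directly from $(h_0, b(u))$ to its first-coordinate value at step $t_2$ via a sequence of type (a) moves on $u$ --- valid because in the configuration $\sigma_{t_1 - 1}$, any type (a) move on $u$ to $(h_*, b(u))$ with $h_* \sim h_1$ is legal --- while the other moves of the subpath (on neighbors of $u$ and on distant vertices) are carried out around this rewiring.

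The main obstacle is rigorously handling the interaction between this rewiring and moves within the subpath that involve neighbors of $u$: a similar analysis shows that during the ``bad phase'' between $t_1$ and $t_2$, moves on neighbors of $u$ must themselves be type (c), propagating the bad region outward. I expect to handle this by choosing $u$ carefully (for instance, such that no type (c) move on its neighbors occurs within the subpath) or via an outer induction that peels off outermost type (c) moves first. The square-free property of $H$ --- hence of $H \times K_2$ --- should prevent nontrivial braidings of reconfiguration moves that would obstruct the rewiring. Once all type (c) moves are eliminated, the resulting path consists entirely of type (a) moves, giving the desired path in $\Col(G, H \times K_2)$ from $\alpha$ to $\beta$.
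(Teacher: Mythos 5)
The forward direction and your preliminary observations are fine: since $H$ is irreflexive, $(h,0)$ and $(h,1)$ have no common neighbour in $(H\times K_2)^\circ$, so no move on a non-isolated vertex can flip only the $K_2$-coordinate, and the parity count of side-flipping moves per vertex is correct. But the reverse direction is not a proof; it is a plan whose central difficulty you explicitly leave open. The surgery you propose --- excising the excursion of $u$ to the wrong side between two consecutive type~(c) moves $t_1,t_2$ and rerouting $u$ by type~(a) moves --- founders on exactly the point you flag: while $u$ sits on the wrong side, any neighbour $v$ of $u$ that moves must itself move to the wrong side (its new colour must be adjacent to $\sigma(u)$, which lies on $v$'s wrong side), so the wrong-side region propagates outward and the excursions of different vertices nest and interleave arbitrarily. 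Neither of your two suggested fixes is carried out: there need not exist a vertex $u$ whose neighbours perform no type~(c) move inside $[t_1,t_2]$, and ``peeling off outermost type~(c) moves'' is not defined, let alone shown to terminate or to preserve reachability. You also do not say which type~(a) colours $u$ should visit during the reroute or why each such intermediate colour is adjacent to the (possibly changing) colours of all of $u$'s neighbours throughout the excised interval. As it stands the argument has a genuine gap at its core.

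For comparison, the paper avoids local surgery entirely: it replaces \emph{every} move $C_i$ of the given sequence by a canonical projected move $C_i'$, simultaneously for all vertices. A move landing on the right side is kept verbatim; a move from the right side to the wrong side is deleted; and a move between two wrong-side colours $a$ and $c$ is replaced by a move to the \emph{unique} common neighbour of $a$ and $c$ in $H\times K_2$, uniqueness being exactly where square-freeness enters. One then proves by induction on the prefix length that the projected sequence keeps every vertex on the right side and remains a homomorphism $G\to H\times K_2$ after each step; the case analysis there is precisely the bookkeeping of how a vertex's projected colour relates to its neighbours' projected colours, which is the interaction your proposal defers. If you want to salvage your approach, the cleanest route is to abandon the excision idea and adopt such a global projection: define, for each homomorphism occurring along the path, a canonical right-side ``shadow'' of each wrong-side vertex, and check that consecutive shadows differ in at most one vertex and stay homomorphisms.
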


\begin{proof} Since $H \times K_2$ is a subgraph of $(H \times K_2)^\circ$, we have that $\Col(G, H \times K_2)$ is a subgraph of $\Col(G, (H \times K_2)^\circ)$. Therefore, any path in $\Col(G, H \times K_2)$ is also a path in $\Col(G, (H \times K_2)^\circ)$.
On the other hand, assume that there a path from $\alpha$ to $\beta$ in $\Col(G, (H \times K_2)^\circ)$. Without loss of generality, assume that $\alpha$ and $\beta$ map the vertices of $G_0$ to $H_0$ and those of $G_1$ to $H_1$. If a vertex $u \in G_0$ (resp., $u \in G_1$) is mapped by a homomorphism $\alpha : G \to (H \times K_2)^\circ$ to $H_1$ (resp., $H_0$), we say that it is on the \emph{wrong} side; otherwise it is on the \emph{right} side. We assumed above that for $\alpha$ and $\beta$, all vertices are on the right side. 

We view the path from $\alpha$ to $\beta$ in $\Col(G, (H \times K_2)^\circ)$ as a sequence $S = (C_1, \ldots, C_\ell)$ of color changes. If after any color change $C_i$, we have a homomorphism $\alpha$ that maps a vertex of $G$ to the wrong side, then $\alpha$, even if it is a homomorphism $G \to H \times K_2$, is not reachable from $\alpha$ or $\beta$ in $\Col(G, H \times K_2)$. So in this case we modify $S$ such that the vertices of $G$ never get recolored to the wrong side.
For each color change $C_i$ in $S$, we thus associate a color change $C'_i$ defined as follow:
\begin{itemize}
    \item If $C_i$ moves a vertex $u$ of $G$ to the right side, then $C'_i$ moves $u$ to the same color.
    \item If $C_i$ moves a vertex $u$ from the right side to the wrong side, then $C'_i$ does nothing.
    \item If $C_i$ moves a vertex $u$ from a color $a$ on the wrong side to a color $c$ on the wrong side, then $C'_i$ moves $u$ to the unique common neighbor $b$ of $a$ and $c$ in $H \times K_2$ ($b$ is unique since $H \times K_2$ is square-free).
\end{itemize}
The modified sequence of color changes is $S' = (C'_1, \ldots, C'_\ell)$. Notice that changing the color a vertex gets recolored to also changes the initial color of that vertex the next time it gets recolored. For this reason, we did not specify the initial color for the $C'_i$. It is easy to see that after a sequence of color changes $(C'_1, \ldots, C'_i)$ (for $i \leq \ell$), all vertices of $G$ are mapped to the right side. However, it is less clear that the vertex mapping is a homomorphism $G \to H \times K_2$. We show next by induction on $i$.
Let $i \leq \ell$ and assume that $\alpha$, modified by the sequence of color changes $(C'_1, \ldots, C'_i)$, defines a homomorphism $G \to H \times K_2$. Let $u$ be the vertex of $G$ whose color is changed by $C_{i+1}$ and $v$ be a neighbor of $v$.
\begin{itemize}
    \item If $u$ is recolored from a color $a$ on the right side to a color $c$ on right side by $C_{i+1}$, then $C'_{i+1}$ is exactly the same as $C_{i+1}$. During $C_{i+1}$ the color of $v$ must be the unique neighbor $b$ of $a$ and $b$ in $H \times K_2$; since $b$ is on the right side for $v$, it is also the color of $v$ during $C'_{i+1}$ so we have a homomorphism afterwards.
    \item If $u$ is recolored from the right side to the wrong side by $C_{i+1}$, then $C'_{i+1}$ does nothing so we have a homomorphism afterwards.
    \item If $u$ is recolored from a color $b$ on the wrong side to a color $c$ on the right side by $C_{i+1}$, then $v$ must have color either $b$ or $c$ during $C_{i+1}$. If $v$ has color $b$, which is on the right side for $v$, then this also happens during $C'_{i+1}$. If $v$ has color $c$, then the last color change of $v$ in $S$ before $C_{i+1}$ was a move from a neighbor $a$ of $b$ (the color of $u$) or from $b$ itself. In any case, the corresponding move in $S'$ must have left $v$ with color $b$. Eventually, the edge $(uv)$ of $G$ gets correctly mapped to the edge $(bc)$ of $H \times K_2$.
    \item If $u$ is recolored from a color $b$ on the wrong side to a color $d$ on the wrong side by $C_{i+1}$, then $v$ must have the unique common neighbor $c$ of $b$ and $d$ during $C_{i+1}$; notice that $c$ is on the wrong side for $v$. The last color change of $v$ in $S$ before $C_{i+1}$ was a move from a neighbor $a$ of $b$ (the color of $u$) or from $b$ itself. In any case, the corresponding move in $S'$ must have left $v$ with color $b$. Eventually, the edge $(uv)$ of $G$ gets correctly mapped to the edge $(cb)$ of $H \times K_2$.
\end{itemize}

\end{proof}

Wrochna proved in~\cite{Wrochna:14} that bipartite instances of $\Recol(H)$ can be polynomially reduced to $\Recol(H \times K_2)$, so \cref{lem:eqbipartite} yields \cref{prop:reflexive}.

 % \mhs{ MUSING ON THIS QUESTION If $G$ is a $C_4$-free irreflexive graph, the reflexive graph $G^{(2)}$ on the same vertices in which $u \sim v$ if there is a $2$-walk from $u$ to $v$ is diamond-free. 

\section{Discussion}
\label{sec:discussion}

We showed by a simple reduction that $\Recol(H)$ admits a polynomial-time algorithm whenever $H$ is reflexive and square-free (\cref{cor:square-free}). This is a strengthening of~\cite[Corollary 1.2]{Lee:21} (see \cref{girth5}).
We note that $\B(K_4^\circ)$ is the $3$-cube, so this is one example where  $\Recol(\B(H))$ is $\PSPACE$-complete while $\Recol(H)$ is not. Furthermore, we showed that a polynomial reduction in the other direction exists for bipartite instances (\cref{prop:reflexive}).

In order to make progress towards \cref{con:dichotomy}, a goal would be a construction similar to those from \cite{FV:98} that yields for any graph $H$, an irreflexive graph $H'$, such that $\Recol(H)$ is polynomially equivalent to $\Recol(H')$.  The same would be interesting with reflexive graphs $H'$.   
While this is the ideal, it seems difficult, and so a one-way construction, which gives a polynomial reduction rather than a polynomial equivalence, is the next best thing.

As mentioned in the introduction, it was shown in \cite{Lee:20} that $\Recol(H)$ is $\PSPACE$-complete if $H$ is an irreflexive non-trivial $K_{2,3}$-free quadrangulation of the sphere, and then the same was showed when $H$ is a reflexive non-trivial $K_4$-free triangulation of the sphere.  We claimed that when $H$ is a non-trivial reflexive triangulation of the sphere, then $\B(H)$ is a $K_{2,3}$-free quadrangulation of the sphere, and so many of their reflexive results follow from their irreflexive result by \cref{prop:eqtowro}.
To see this, one simply has to observe that the maximal cliques in $H$ are its triangles (recall that `non-trivial' implied that $H$ was is not $K^\circ_4$). Embedding them in the middle of the triangle they represent, the edges of $\B(H)$ go to the vertices on the triangle they are on.  Every edge of $H$ is `replaced' by the quadrangle consisting of its endpoints and the vertices representing the two triangles that it bounds.  It is not hard to see that the only way this gives a $K_{2,3}$ is if $H$ is a triangle; this is also excluded as a 'trivial' case.  

We wonder to what extent the irreflexive results of \cite{Lee:20} are made redundant by \Cref{prop:eqtowro}.  Is every $K_{2,3}$ quadrangulation $Q$ of a sphere equal to $\B(H)$ for some triangulation?
Actually, the results in \cite{Lee:20} are a bit stronger than this. It would be enough too show that every $K_{2,3}$-free quadrangulation $Q$ of a sphere is equal to $\B(H)$ where $H$ is a stiff two-connected planar graph with some vertex of degree at least $4$ all of whose faces are triangles.

\bibliographystyle{plain}
\bibliography{mainArxiv}

\end{document}